\newtheorem{theorem}{Theorem}
\newtheorem{lemma}{Lemma}
\newtheorem{corollary}{Corollary}
\newtheorem{observation}{Observation}
\newcommand{\prb}[1]{\textnormal{\scshape #1}}
\newcommand{\xor}{{\mathbin{\triangle}}}
\newcommand{\floor}[1]{{\lfloor{#1}}\rfloor}
\title{Computing diverse pair of solutions for tractable \prb{SAT}\thanks{Supported by JSPS KAKENHI Grant Numbers
JP20H00595, 
JP21K11752, 
JP21K17812, 
JP22H00513, 
JP22K17854, 
JP23K28034, 
JP23K24806, 
JP24H00686, 
JP24H00697, 
JP24K02901, 
JP24K21315, 
JP24K23847, 
JST ACT-X Grant Number JPMJAX2105, 
and JST SPRING Grant Number JPMJSP2114. 
}}
\author{
Tatsuya Gima\thanks{Faculty of Information Science and Technology, Hokkaido University. Email: \texttt{\{gima,koba\}@ist.hokudai.ac.jp}} \and
Yuni Iwamasa\thanks{Graduate School of Informatics, Kyoto University. Email: \texttt{iwamasa@i.kyoto-u.ac.jp}} \and
Yasuaki Kobayashi\footnotemark[2] \and
Kazuhiro Kurita\thanks{Graduate School of Informatics, Nagoya University. Email: \texttt{kurita@i.nagoya-u.ac.jp}, \texttt{otachi@nagoya-u.jp}} \and
Yota Otachi\footnotemark[4] \and
Rin Saito\thanks{Graduate School of Information Sciences, Tohoku University. Email: \texttt{rin.saito@dc.tohoku.ac.jp}}
}
\begin{document}

\maketitle

\begin{abstract}
    In many decision-making processes, one may prefer \emph{multiple} solutions to a \emph{single} solution, which allows us to choose an appropriate solution from the set of promising solutions that are found by algorithms.
    Given this, finding a set of \emph{diverse} solutions plays an indispensable role in enhancing human decision-making. 
    In this paper, we investigate the problem of finding diverse solutions of \prb{Satisfiability} from the perspective of parameterized complexity with a particular focus on \emph{tractable} Boolean formulas.
    We present several parameterized tractable and intractable results for finding a diverse pair of satisfying assignments of a Boolean formula.
    In particular, we design an FPT algorithm for finding an ``almost disjoint'' pair of satisfying assignments of a $2$CNF formula.
\end{abstract}

\section{Introduction}
\emph{Diversity of solutions} in optimization problems is an important concept, attracting considerable attention in both practical and theoretical contexts.
In many optimization problems, the primary goal is to find a single (nearly) optimal solution.
However, such an optimal solution may not be advantageous, as optimization problems are nothing more than ``approximation'' of real-world problems.
One possible remedy to this issue is to find multiple \emph{diverse} solutions, where we mean by diverse solutions a set of solutions that are ``different'' from each other.
These diverse solutions provide flexibility for considering various intricate real-world factors that cannot be precisely modeled in optimization problems.

Due to such importance, finding diverse solutions for combinatorial optimization problems is well studied in the literature.
In particular, theoretical aspects of finding diverse solutions have been investigated recently, such as polynomial-time solvability~\cite{HanakaKKO21,HanakaKKLO22,deBergMS23}, approximability~\cite{HanakaK0KKO23,GaoGMKTTY22,DoGN023}, and fixed-parameter tractability~\cite{HanakaKKO21,FominGJPS20,FominGPP021,EibenKM23,BasteFJMOPR22} for many combinatorial optimization problems.

Apart from these results, the pursuit of diverse solutions for the satisfiability problem or, more broadly, for the constraint satisfaction and optimization problems have been explored from both practical and theoretical perspectives.
Crescenzi and Rossi~\cite{CrescenziR02} initiated the study of \prb{Max Hamming Distance SAT}, where the objective is to find two satisfying assignments of the given Boolean formula such that the Hamming distance\footnote{The Hamming distance between two truth assignments is defined as the number of variables that are assigned distinct truth values.} between them is maximized.
They analyzed the (in)approximability of this problem, which will be discussed later in detail.
Angelsmark and Thapper~\cite{AngelsmarkT04} gave exact exponential-time algorithms for \prb{Max Hamming Distance 2SAT} and, more generally, \prb{Max Hamming Distance $(d, \ell)$-CSP} with domain size $d$ and arity $\ell$.
Merkl et al.~\cite{MerklPS23} studied the problem of finding $k$ answers of conjunctive queries (which is equivalent to CSP) from the viewpoint of parameterized complexity.
They showed several parameterized complexity lower and upper bounds for acyclic conjunctive queries in terms of the data, query, and combined complexity.
There are numerous experimental studies on finding $k$ satisfying assignments for SAT/CSP that maximize some metric defined over satisfying assignments~\cite{HebrardHOW05,Nadel11,PetitT15,RuffiniVGKBS19,ZhouLYH23,NikfarjamR0023}.

In this paper, we address the problem of finding diverse solutions for the Boolean satisfiability problem (\prb{SAT} for short).
As \prb{SAT} is already NP-hard (for finding a single satisfying assignment), the problem of finding diverse satisfying assignments for Boolean formulas is NP-hard.
Thus, our target is to find diverse satisfying assignments for \emph{tractable} Boolean formulas.
By the seminal work of Schaefer~\cite{Schaefer78}, it would be reasonable to consider the cases where a given Boolean formula belongs to the classes of $2$CNF formulas, Horn formulas, dual Horn Formulas, or XOR formulas (see \Cref{sec:preli} for details).
We particularly focus on the problem of finding a diverse pair of satisfying assignments, namely \prb{Max Hamming Distance SAT}, as it is already inapproximable in the general case.
As mentioned above, Crescenzi and Rossi~\cite{CrescenziR02} showed the following taxonomy of inapproximability.

\begin{theorem}[\cite{CrescenziR02}]
    Given a Boolean formula $\phi$, \prb{Max Hamming Distance SAT} can be solved in polynomial time if $\phi$ is $01$-valid or even-affine;
    APX-complete if $\phi$ is affine; 
    PolyAPX-complete if $\phi$ is strongly $0$-valid, strongly $1$-valid, Horn, dual Horn, or $2$CNF.
    Otherwise, this problem is NP-hard even for finding a feasible solution.
\end{theorem}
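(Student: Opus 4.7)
The plan is to follow the Schaefer-style dichotomy framework, handling each constraint class in turn and exploiting the algebraic or combinatorial structure it affords. For each class I need to address two tasks: determine the exact complexity of finding two satisfying assignments maximizing the Hamming distance, and, in the inapproximability regimes, characterize the approximation threshold. I would rely on Schaefer's theorem itself to dispose of the final NP-hard case, so no new reduction is required there.

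For the polynomial-time cases, the constructions are direct. If $\phi$ is $01$-valid, then both $\vec{0}$ and $\vec{1}$ satisfy $\phi$, so simply returning the pair $(\vec{0},\vec{1})$ achieves the trivially optimal distance $n$. If $\phi$ is even-affine, every constraint is an XOR equation whose underlying variable set has even cardinality; such a constraint is invariant under simultaneously complementing all variables, so the solution set is closed under $x \mapsto \vec{1} \oplus x$. Thus from any single satisfying assignment (computable by Gaussian elimination) its bitwise complement is another satisfying assignment at distance $n$.

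For the APX-complete affine case, the set of satisfying assignments is a coset $x_0 + W$ of a linear subspace $W \subseteq \mathbb{F}_2^n$, so the Hamming distance between two assignments equals the weight of a vector of $W$. Thus \prb{Max Hamming Distance SAT} reduces to \prb{Maximum Weight Codeword} on the code $W$, which is well known to be APX-complete; membership in APX comes from the constant-factor approximation obtainable from a basis of $W$ (combining a random codeword argument with a greedy basis selection), and hardness from a standard reduction out of \prb{MAX-CUT} or \prb{MAX-E3-LIN-2}, whose instance I would embed as a system of parity equations so that cut values translate to codeword weights.

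For the PolyAPX-complete classes (strongly $0$-valid, strongly $1$-valid, Horn, dual Horn, and $2$CNF), the containment is easy: any feasible pair has distance between $1$ and $n$, so an $n$-approximation is trivial. The hard part, and the main obstacle, is the PolyAPX-hardness reductions from \prb{Maximum Independent Set}: I would need tailored gadgets in each class that let differences between satisfying assignments encode independent sets while respecting the syntactic restrictions — e.g.\ implicational structure for Horn, $2$-clauses for $2$CNF, and enforced variable values for the strongly $b$-valid classes — with care that the reduction is approximation-preserving (an $L$- or $E$-reduction). Finally, the remaining NP-hard classes are precisely the ones for which Schaefer's theorem declares \prb{SAT} itself NP-hard, so even producing one satisfying assignment, let alone a pair, is NP-hard; this case requires no new argument. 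I expect the gadget design for Horn and $2$CNF to be the most delicate step, since one must simultaneously preserve satisfiability under partial variable flips and control the approximation ratio of the reduction.
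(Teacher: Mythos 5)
This statement is not proved in the paper at all: it is quoted verbatim from Crescenzi and Rossi~\cite{CrescenziR02} as background, so there is no in-paper argument to compare yours against. Judged on its own terms, your sketch follows the same Schaefer-style case analysis as the original source, and the easy parts are handled correctly: the $01$-valid and even-affine cases (the solution set is closed under global complementation when every XOR clause has even length, so any solution and its complement give distance $n$), the reduction of the affine case to maximum-weight codeword with APX-membership via the half-weight averaging argument and APX-hardness via \prb{Max Cut}, and the final case, which indeed follows directly from Schaefer's dichotomy since even one satisfying assignment is NP-hard to find there.

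The genuine gap is in the PolyAPX-complete block, which is the technical heart of the theorem and which you explicitly defer (``I would need tailored gadgets in each class''). A plan to build gadgets is not a proof, and this is where all the work lies: one must exhibit, for each of the five classes, an approximation-preserving reduction from \prb{Max Independent Set} (or \prb{Max Clique}) in which Hamming distance between satisfying assignments tracks the independent-set size up to a constant factor. For antimonotone $2$CNF the construction $\bigwedge_{\{u,v\}\in E}(\neg x_u \vee \neg x_v)$ works (the paper itself uses it in \Cref{lem:hardness:antimonotone}, where the distance equals the size of an induced bipartite subgraph, which is within a factor $2$ of the maximum independent set), and it simultaneously covers Horn; the monotone variant covers dual Horn; but the strongly $0$-valid and strongly $1$-valid classes need different gadgets that your sketch does not address at all. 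There is also a small error in your PolyAPX-\emph{membership} argument: a feasible pair can have distance $0$, not ``between $1$ and $n$,'' so the trivial $n$-approximation only goes through once you show how to produce two \emph{distinct} satisfying assignments whenever they exist (easy for these classes by fixing each variable both ways and testing satisfiability, but it must be said).
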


Here, a Boolean formula is said to be
\begin{itemize}\setlength{\itemsep}{0pt} 
    \item \emph{$01$-valid} if it is satisfied by both the ``all-$0$'' assignment (i.e., $\alpha(x)=0$ for all variables $x$) and the ``all-1'' assignment (i.e., $\alpha(x)=1$ for all variables $x$);
    \item \emph{even-affine} if it is an XOR formula such that each XOR clause contains an even number of literals;
    \item \emph{affine} if it is an XOR formula;
    \item \emph{strongly $0$-valid} (resp.\  \emph{strongly $1$-valid}) if it is satisfied by every ``at-most-one-1'' (resp.\  ``at-most-one-0'') assignment, that is, $\alpha(x) = 0$ (resp.\  $\alpha(x) = 1$) for all but at most one variables $x$.
\end{itemize}

The taxonomy of \cite{CrescenziR02} shows a complete picture of (in)approximability of \prb{Max Hamming Distance SAT}.
However, in terms of exact solvability, there is still room for further investigation into their results.
To take a step forward in the setting of exact solvability, we analyze the complexity of \prb{Max Hamming Distance SAT} through the lens of \emph{parameterized complexity}.
In this context, we formalize our problems as follows.
\begin{tcolorbox}
\begin{description}\setlength{\itemsep}{0pt} 
  \item[Problem.] \prb{Diverse Pair of Solutions}
  \item[Input.] A Boolean formula $\varphi$ over variable set $X$.
  \item[Parameter.] $d$
  \item[Task.] Decide if there are two satisfying assignments $\alpha_1, \alpha_2$ of $\varphi$ such that $|\alpha_1 \xor \alpha_2| \ge d$.
\end{description}
\end{tcolorbox}

\begin{tcolorbox}
\begin{description}\setlength{\itemsep}{0pt} 
  \item[Problem.] \prb{Dissimilar Pair of Solutions}
  \item[Input.] A Boolean formula $\varphi$ over variable set $X$.
  \item[Parameter.] $s$
  \item[Task.] Decide if there are two satisfying assignments $\alpha_1, \alpha_2$ of $\varphi$ such that $|\alpha_1 \xor \alpha_2| \ge |X| - s$.
\end{description}
\end{tcolorbox}

Here, $\alpha_1 \xor \alpha_2$ for two truth assignments $\alpha_1$ and $\alpha_2$ is the set of variables in $X$ that are assigned distinct truth values in $\alpha_1$ and $\alpha_2$.
Although these two problems are equivalent to \prb{Max Hamming Distance SAT}, they are significantly different in terms of parameterized complexity.

Our results are summarized in \Cref{tab:results}.
\begin{table*}
    \centering
    \caption{A summary of our results on \prb{Diverse Pair of Solutions} and \prb{Dissimilar Pair of Solutions}.}
    {\renewcommand{\arraystretch}{1.1}
    \begin{tabular}{|c|c|c|}\hline
         & \begin{tabular}{c} \prb{Diverse Pair of Solutions} \\ ($|\alpha_1 \xor \alpha_2| \ge d$) \end{tabular} & 
         \begin{tabular}{c} \prb{Dissimilar Pair of Solutions} \\ ($|\alpha_1 \xor \alpha_2| \ge |X| - s$) \end{tabular}\\\hline
         \raisebox{-1pt}{2SAT}         & \raisebox{-1pt}{W[1]-hard (Thm.~\ref{thm:MaxPoS:2SAT:main}) and XP} & \raisebox{-1pt}{FPT (Thm. \ref{thm:DualPoS:2SAT:main})}  \\\hline
         \raisebox{-1pt}{Horn SAT}    & \raisebox{-1pt}{W[1]-hard (Thm.~\ref{thm:MaxPoS:2SAT:main}) and XP} & \raisebox{-1pt}{NP-hard with $s = 0$ (Thm. \ref{thm:DualPoS:Horn:NPC})}\\\hline
         \raisebox{-1pt}{Dual Horn SAT}& \raisebox{-1pt}{W[1]-hard (Thm.~\ref{thm:MaxPoS:2SAT:main}) and XP} & \raisebox{-1pt}{NP-hard with $s = 0$ (Thm. \ref{thm:DualPoS:Horn:NPC})}  \\\hline
         \raisebox{-1pt}{XOR SAT}      & \raisebox{-1pt}{FPT (Thm.~\ref{thm:MaxPos:XOR:main})} & \raisebox{-1pt}{W[1]-hard (Thm.~\ref{thm:MaxPos:XOR:main}) and XP} \\\hline
    \end{tabular}
    }
    \label{tab:results}
\end{table*}
We observe that \prb{Dissimilar Pair of Solutions} is already hard even on monotone or antimonotone CNF formulas with $s = 0$, where a CNF formula is said to be \emph{monotone} (resp.\  \emph{antimonotone}) if it has no negative literals (resp.\  positive literals).
We also show that \prb{Diverse Pair of Solutions} is W[1]-hard when parameterized by $d$.
This intractability result is established even when the instance is restricted to monotone or antimonotone $2$CNF formulas.
On the positive side, we give a fixed-parameter tractable algorithm for \prb{Dissimilar Pair of Solutions} for $2$CNF formulas.
To this end, we reduce the problem to \prb{Almost 2SAT with Hard Constraints} and then develop an algorithm using a similar idea of the algorithm for \prb{Almost 2SAT} due to \cite{CyganPPW13}.
For XOR formulas, we show that \prb{Max Hamming Distance SAT} is equivalent to the problem of finding a solution of a system of linear equations over $\mathbb F_2$ with maximum Hamming weight, which yields the fixed-parameter tractability and W[1]-hardness of \prb{Diverse Pair of Solutions} and \prb{Dissimilar Pair of Solutions}, respectively.
Finally, we consider the class of intersection of Horn and dual Horn formulas.
In this case, by exploiting the lattice structure induced by the solution space of a given formula, we devise a polynomial-time algorithm for finding $k$ satisfying assignments that maximize the sum of pairwise Hamming distances between them.

\paragraph{Independent work.}
Very recently, Misra, Mittal, and Rai~\cite{misra_et_al:LIPIcs.ISAAC.2024.50} obtained similar results for our problems.
They showed that for affine formulas, \prb{Diverse Pair of Solutions} admits a single-exponential FPT algorithm and a polynomial kernelization and \prb{Dissimilar Pair of Solutions} is W[1]-hard.
For 2CNF formulas, they showed that \prb{Diverse Pair of Solutions} is W[1]-hard.
We would like to mention that the parameterized complexity of \prb{Dissimilar Pair of Solutions} for 2CNF formulas was left open in their work, which is shown to be fixed-parameter tractable in this work.

\section{Preliminaries}\label{sec:preli}

\paragraph{Boolean formulas.}
A \emph{literal} is a variable (\emph{positive literal}) or its negation (\emph{negative literal}).
A \emph{clause} is a disjunction of literals.
A conjunction of clauses is called \emph{conjunctive normal form} (\emph{CNF} for short) \emph{formula}.

A CNF formula is said to be
\begin{itemize}\setlength{\itemsep}{0pt} 
    \item \emph{$k$CNF} if each clause contains at most $k$ literals;
    \item \emph{Horn} if each clause contains at most one positive literal;
    \item \emph{dual Horn} if each clause contains at most one negative literal;
    \item \emph{double Horn} if it is a Horn and dual Horn formula;
    \item \emph{monotone} if it has no negative literals;
    \item \emph{antimonotone} if it has no positive literals.
\end{itemize}
By definition, every antimonotone formula is a Horn formula and every monotone formula is a dual Horn formula.
A Boolean formula is called an \emph{XOR formula} (or \emph{affine}) if it is a conjunction of \emph{XOR clauses}, that is, each clause is of the form $x_{i_1} \oplus x_{i_2} \oplus \cdots \oplus x_{i_k}$, where $x_{i_1}, x_{i_2}, \ldots, x_{i_k}$ are literals and $\oplus$ is the (logical) exclusive-or operator.

According to the taxonomy for Boolean CSP due to Schaefer~\cite{Schaefer78}, the problem of deciding whether a given Boolean formula $\phi$ has a satisfying assignment is polynomial-time solvable when $\phi$ is a $2$CNF, Horn, dual Horn, or XOR formula.

For a truth assignment $\alpha\colon X \to \{0, 1\}$ to a Boolean formula $\varphi$ over variable set $X$, we denote by $\overline{\alpha}$ the truth assignment such that $\overline{\alpha}(x) = 1 - \alpha(x)$ for $x \in X$.
For two truth assignments $\alpha_1$ and $\alpha_2$, we denote by $\alpha_1 \xor \alpha_2$ the set of variables that disagree on these assignments, i.e., $\alpha_1 \xor \alpha_2 = \{x \in X : \alpha_1(x) \neq \alpha_2(x)\}$.

\paragraph{Parameterized complexity.}
In parameterized complexity theory~\cite{CyganFKLMPPS15,DowneyF99,Niedermeier06,FlumG06}, we measure the complexity of computational problems with two parameters.
A \emph{parameterized problem} $L$ consists of pairs of input string $x \in \Sigma^*$ and a parameter $k \in \mathbb N$ (i.e., $L \subseteq \Sigma^* \times \mathbb N$).
A central notion in parameterized complexity theory is \emph{fixed-parameter tractability}.
A parameterized problem $L$ is said to be \emph{fixed-parameter tractable} if, given a string $x$ and parameter $k$, there is an algorithm deciding if $(x, k) \in L$ running in time $f(k)|x|^{O(1)}$, where $f$ is a computable function and $|x|$ is the length of $x$.
Similarly to the classical complexity theory, there is a hierarchy of complexity classes of parameterized problems:
\begin{align*}
    \text{FPT} = \text{W}[0] \subseteq \text{W}[1] \subseteq \dots \subseteq \text{W}[\text{P}] \subseteq \text{XP},
\end{align*}
where the class FPT (resp.\  XP) consists of all parameterized problems that are fixed-parameter tractable (resp.\  admit algorithms with running time $|x|^{f(k)}$ for some computable function $f$).
A well-known conjecture states that these inclusions are strict.
In particular, the problems that are hard in W[1] are considered unlikely to be fixed-parameter tractable.

\section{Hardness for \texorpdfstring{$2$}{2}CNF, Horn, and dual Horn formulas}
We first prove the following theorem.
\begin{theorem}\label{thm:DualPoS:Horn:NPC}
    \prb{Dissimilar Pair of Solutions} is NP-complete even for antimonotone or monotone $3$CNF formulas with $s = 0$.
\end{theorem}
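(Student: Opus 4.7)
The starting observation is that, since $|\alpha_1 \xor \alpha_2| \le |X|$ always holds, requiring $|\alpha_1 \xor \alpha_2| \ge |X| - s$ with $s = 0$ is equivalent to requiring $\alpha_2 = \overline{\alpha_1}$. Hence \prb{Dissimilar Pair of Solutions} with $s=0$ asks precisely whether $\varphi$ admits a satisfying assignment $\alpha$ such that $\overline{\alpha}$ is also satisfying. Membership in NP is then immediate: guess $\alpha$ and verify in polynomial time that both $\alpha$ and $\overline{\alpha}$ satisfy $\varphi$.

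For NP-hardness I plan to reduce from \emph{Monotone NAE-$3$SAT}, the problem of deciding whether a monotone $3$CNF formula admits an assignment in which every clause contains both a true and a false literal. This variant is classically NP-complete (a direct instance of Schaefer's dichotomy). Given such an instance $\psi$, the reduction simply outputs $\varphi := \psi$ together with parameter $s = 0$. Correctness rests on the following clause-level equivalence: a monotone clause $C = x \vee y \vee z$ is satisfied by $\alpha$ iff at least one of $\alpha(x), \alpha(y), \alpha(z)$ equals $1$, and is satisfied by $\overline{\alpha}$ iff at least one of them equals $0$; so $C$ is satisfied by both $\alpha$ and $\overline{\alpha}$ iff $C$ is NAE-satisfied by $\alpha$. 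Intersecting over all clauses yields the equivalence, proving the monotone case of the theorem.

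For the antimonotone case, I would instead output the formula $\varphi'$ obtained from $\psi$ by replacing each clause $x \vee y \vee z$ with $\overline{x} \vee \overline{y} \vee \overline{z}$, which is an antimonotone $3$CNF. Since the NAE condition is invariant under simultaneously complementing all literals of a clause, the same pair $(\alpha, \overline{\alpha})$ analysis applies verbatim and yields the equivalence with NAE-satisfiability of $\psi$.

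There is no real obstacle in this argument; the only external fact invoked is NP-hardness of Monotone NAE-$3$SAT. The one point worth confirming is that the $s = 0$ constraint genuinely forces $\alpha_2 = \overline{\alpha_1}$ as an identity on all of $X$, which it does because $\alpha_1 \xor \alpha_2 \subseteq X$. Everything else is a direct rewriting of the satisfaction condition of each (anti)monotone clause under $\alpha$ and $\overline{\alpha}$.
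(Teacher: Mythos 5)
Your proof is correct and is essentially the paper's argument: the paper reduces from \prb{Set Splitting} (with sets of size at most three), which is just \prb{Monotone NAE-3SAT} under a different name, and it builds exactly the same formulas $\bigvee_{v\in S} \neg x_v$ (antimonotone) and $\bigvee_{v\in S} x_v$ (monotone), with the same observation that $s=0$ forces $\alpha_2 = \overline{\alpha_1}$ so that both assignments satisfy a clause iff the clause is ``split''/NAE-satisfied.
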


\begin{proof}
The problem obviously belongs to NP\@.
We first prove the statement for antimonotone formulas by presenting a reduction from \prb{Set Splitting}, which is known to be NP-complete~\cite{GareyJ79}.

For a bipartition $(A, B)$ of a finite set $U$, we say that $(A, B)$ \emph{splits} $S \subseteq U$ if $A \cap S \neq \emptyset$ and $B \cap S \neq \emptyset$.
\prb{Set Splitting} is defined as follows: Given a family of sets $\mathcal{F}$ over a universe set $U$, the objective is to decide whether there exists a bipartition of $U$ into $A$ and $B$ such that $(A, B)$ splits $S$ for all $S \in \mathcal{F}$.
This problem remains NP-complete even if all sets in $\mathcal{F}$ have at most three elements~\cite{GareyJ79}.

From an instance $(U, \mathcal{F})$ of \prb{Set Splitting}, we construct a $3$CNF formula $\varphi_{\mathcal{F}}$ with variable set $\{x_v : v \in U\}$ such that for each set $S \in \mathcal F$, $\varphi_{\mathcal{F}}$ contains a clause of the form $\bigvee_{v \in S} \neg x_{v}$.
As $|S| \le 3$ for $S \in \mathcal F$, $\varphi_{\mathcal{F}}$ is $3$CNF and antimonotone.
We claim that $(U, \mathcal{F})$ is a yes-instance if and only if $\varphi_{\mathcal{F}}$ has two satisfying assignments $\alpha_1, \alpha_2$ such that $|\alpha_1 \xor \alpha_2| = |U|$, that is, these two assignments disagree on all variables.

We first prove the forward direction.
Let $(A, B)$ be a partition of $U$ that splits each set in $\mathcal{F}$.
We define a truth assignment $\alpha$ by $\alpha(x_v) = 1$ if $v \in A$; $\alpha(x_v) = 0$ otherwise (i.e., $v \in B$) for $v \in U$.
Since $(A, B)$ splits each set in $\mathcal{F}$, each clause $C$ contains distinct variables $x_u$ and $x_v$ such that $\alpha(x_u) = 1$ and $\alpha(x_v) = 0$.
The antimonotonicity of $\varphi_{\mathcal{F}}$ implies that both $\alpha$ and $\overline{\alpha}$ are satisfying assignments of $\varphi_{\mathcal{F}}$.
As $|\alpha \xor \overline{\alpha}| = |U|$, the forward implication follows.
 
We next prove the converse direction. 
Let $\alpha$ and $\beta$ be two satisfying assignments of $\varphi_{\mathcal F}$ with $|\alpha_1 \xor \alpha_2| = |U|$. 
It is easy to see that $\beta = \overline{\alpha}$.
Let $A = \{v \in U  :  \alpha(x_v) = 1\}$ and $B = \{v \in U  :  \beta(x_v) = \overline{\alpha}(x_v) = 1\}$.
Clearly, $(A, B)$ is a partition of $U$.
Since both $\alpha$ and $\beta$ are satisfying assignments, each clause contains variables $x_u, x_v$ with $\alpha(x_u) = 1$ and $\alpha(x_v) = 0$.
Hence, $(A, B)$ splits each set $S$ in $\mathcal{F}$.

In the case of monotone formulas, we construct $\varphi_{\mathcal F}$ by taking a clause $\bigvee _{v \in S} x_v$ instead of $\bigvee_{v \in S} \neg x_v$ for each $S \in \mathcal F$.
Then, we can prove by the same argument as above.
\end{proof}

The above theorem states that it is hard to find two ``completely different'' satisfying assignments even for Horn or dual Horn formulas.
The following theorem states that it is hard to find two ``slightly different'' satisfying assignments even for $2$CNF, Horn, or dual Horn formulas.

\begin{theorem}\label{thm:MaxPoS:2SAT:main}
    \prb{Diverse Pair of Solutions} is \emph{W[1]}-hard even for monotone or antimonotone $2$CNF formulas.
\end{theorem}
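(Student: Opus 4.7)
The plan is to reduce from \prb{Independent Set}, which is W[1]-hard when parameterized by the solution size $k$. The naive antimonotone encoding of a graph $G$---one variable $x_v$ per vertex and one clause $\neg x_u \vee \neg x_v$ per edge---identifies satisfying assignments with independent sets of $G$, but two independent sets with symmetric difference at least $k$ only guarantee that one of them has size at least $k/2$, so this direct reduction does not preserve the parameter. I would remedy this by a \emph{vertex-doubling} gadget that pins the symmetric difference to exactly $2|I|$ for a single independent set $I$ of $G$.

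Given an instance $(G, k)$ of \prb{Independent Set}, construct an antimonotone $2$CNF formula $\varphi$ on variables $\{x_v^+, x_v^- : v \in V(G)\}$ with the clauses $\neg x_v^+ \vee \neg x_v^-$ for each $v \in V(G)$ together with $\neg x_u^\epsilon \vee \neg x_v^\delta$ for every $uv \in E(G)$ and every $\epsilon, \delta \in \{+,-\}$, and set $d := 2k$. Viewed on the conflict graph $H$ of $\varphi$ (variables as vertices, clauses as edges), these clauses place a $K_4$ on $\{x_u^+, x_u^-, x_v^+, x_v^-\}$ for every $uv \in E(G)$. For the forward direction, from an independent set $I \subseteq V(G)$ of size $k$, take $\alpha_1$ with $(\alpha_1(x_v^+), \alpha_1(x_v^-)) = (1, 0)$ for $v \in I$ and $(0, 0)$ otherwise, and let $\alpha_2$ be obtained from $\alpha_1$ by swapping $+$ and $-$: both assignments satisfy $\varphi$, and $|\alpha_1 \xor \alpha_2| = 2k$.

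For the backward direction, let $\alpha_1, \alpha_2$ be satisfying with $|\alpha_1 \xor \alpha_2| \geq 2k$, let $I_1, I_2 \subseteq V(H)$ be the associated independent sets of $H$, and set $B := I_1 \triangle I_2$, so that $H[B]$ is bipartite with sides $I_1 \setminus I_2$ and $I_2 \setminus I_1$. Classify each $v \in V(G)$ as \emph{doubled} if both of $x_v^+, x_v^-$ lie in $B$ and as \emph{singleton} if exactly one of them does, and write $D, S \subseteq V(G)$ for the resulting sets, so that $|B| = 2|D| + |S| \geq 2k$. The $K_4$ gadgets rule out two $G$-adjacent doubled vertices (whose four variables would induce a $K_4$ inside $H[B]$) as well as any $G$-adjacent doubled-singleton pair (which would induce a triangle in $H[B]$); hence $D$ is independent in $G$, $G[S]$ is bipartite, and there is no $G$-edge between $D$ and $S$. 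If $|D| \geq k$ we are done; otherwise $G[S]$ contains an independent set of size at least $|S|/2 \geq k - |D|$, which together with $D$ yields an independent set of size at least $k$ in $G$. The monotone case follows by rerunning the construction with positive literals (clauses $x_v^+ \vee x_v^-$ and $x_u^\epsilon \vee x_v^\delta$), since $\alpha \mapsto \overline{\alpha}$ is a symmetric-difference-preserving bijection between satisfying assignments of the two formulas. The main obstacle is the structural case analysis on $B$: it is essential that each $G$-edge contributes a full $K_4$ (rather than a weaker bipartite subgraph like $K_{2,2}$) to $H$, so that $G$-adjacent vertices cannot both be doubled.
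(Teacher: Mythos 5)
Your reduction is correct, but it takes a genuinely different route from the paper. The paper reduces from \prb{Maximum Induced Bipartite Subgraph} (W[1]-hard by \cite{KhotR02}) using the naive encoding you dismiss at the outset: one variable $x_v$ per vertex, one clause $\neg x_u \vee \neg x_v$ per edge, and $d = k$. The point is that this encoding, while it does not parameter-preservingly capture \prb{Independent Set} (for exactly the factor-of-two reason you identify), captures \prb{Maximum Induced Bipartite Subgraph} \emph{exactly}: two satisfying assignments $\alpha_1, \alpha_2$ yield independent sets whose ``private'' parts $\{v : \alpha_1(x_v)=1, \alpha_2(x_v)=0\}$ and $\{v : \alpha_1(x_v)=0, \alpha_2(x_v)=1\}$ are disjoint independent sets inducing a bipartite subgraph of size $|\alpha_1 \xor \alpha_2|$, and conversely the two color classes of an induced bipartite subgraph give two satisfying assignments at exactly that Hamming distance. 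You instead start from the more standard source problem \prb{Independent Set} and repair the parameter mismatch with a vertex-doubling gadget ($x_v^+, x_v^-$, the clause $\neg x_v^+ \vee \neg x_v^-$, and a full $K_4$ per edge of $G$), at the cost of a more involved backward analysis (the doubled/singleton case split, ruling out $K_4$'s and triangles in the bipartite graph $H[B]$, and recovering an independent set of size $k$ from $D$ together with half of $S$); I checked this analysis and it is sound, including the essential role of the vertex clauses in upgrading the cross edges from $K_{2,2}$ to $K_4$. What each approach buys: the paper's reduction is shorter and gadget-free but leans on the W[1]-hardness of a less standard problem; yours needs only \prb{Independent Set} but must re-derive, inside the backward direction, essentially the same ``symmetric difference of two independent sets induces a bipartite subgraph'' structure that the paper's choice of source problem makes available for free. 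Your handling of the monotone case via complementation matches the paper's \Cref{cor:hardness:monotone}.
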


To prove this, we give a parameterized reduction from \prb{Maximum Induced Bipartite Subgraph}.
In this problem, given a graph $G = (V, E)$ and an integer $k$, the goal is to determine whether $G$ has an induced bipartite subgraph of at least $k$ vertices.
This problem is known to be W[1]-hard when parameterized by the solution size $k$~\cite{KhotR02}.

From an instance $(G, k)$ of \prb{Maximum Induced Bipartite Subgraph}, we construct a CNF formula $\varphi_G$ as follows.
We define the variable set $X$ of $\varphi$ as $X = \{x_v : v \in V\}$ and construct the formula $\varphi_G$ by taking the conjunction of $\neg x_u \lor \neg x_v$ for all edges $\{u, v\} \in E$.
Let us note that the formula $\varphi_G$ is antimonotone and has only size-2 clauses.

From a vertex set $U \subseteq V$, we can define a truth assignment $\alpha_U\colon X \to \{0, 1\}$ as its indicator function (i.e., $\alpha_U(x_v) = 1$ if and only if $v \in U$).
The following observation is immediate.

\begin{observation}\label{obs:indicator-is}
    Let $U \subseteq V$.
    Then $U$ is an independent set of $G$ if and only if $\alpha_U$ is a satisfying assignment of $\varphi_G$.
\end{observation}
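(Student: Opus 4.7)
The plan is to unwind definitions on both sides and verify that the two statements are literal paraphrases of each other. First I would fix an arbitrary edge $\{u,v\} \in E$ and note that the clause associated with this edge in $\varphi_G$ is $\neg x_u \lor \neg x_v$. By the definition of $\alpha_U$ as the indicator function of $U$, this clause is satisfied by $\alpha_U$ precisely when $\alpha_U(x_u) = 0$ or $\alpha_U(x_v) = 0$, which is equivalent to $u \notin U$ or $v \notin U$, which is equivalent to $\{u,v\} \not\subseteq U$.

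Next I would quantify over all clauses of $\varphi_G$. Since $\varphi_G$ is the conjunction of exactly the clauses $\neg x_u \lor \neg x_v$ over $\{u,v\} \in E$, the assignment $\alpha_U$ satisfies $\varphi_G$ if and only if for every edge $\{u,v\} \in E$ we have $\{u,v\} \not\subseteq U$. This last statement is the definition of $U$ being an independent set of $G$, which closes the equivalence.

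There is essentially no obstacle here: the construction of $\varphi_G$ was tailored to encode the independence condition clause by clause, so the proof is a one-line chain of equivalences. The only thing worth being careful about is making both directions explicit (if $U$ is independent then no clause is violated, and conversely if $\alpha_U$ satisfies $\varphi_G$ then no edge has both endpoints in $U$), but this is immediate from the above.
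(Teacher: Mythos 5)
Your proof is correct and matches the paper's intent exactly: the paper states this observation without proof as "immediate," and your clause-by-clause unwinding of the definitions is precisely the argument that justifies it.
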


\begin{lemma}\label{lem:hardness:antimonotone}
    There is an induced bipartite subgraph of $G$ with at least $k$ vertices if and only if $\varphi_G$ has two satisfying assignments $\alpha_1, \alpha_2$ such that $|\alpha_1 \xor \alpha_2| \ge k$.  
\end{lemma}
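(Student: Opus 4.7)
The plan is to prove both directions by translating between bipartitions of induced bipartite subgraphs and pairs of satisfying assignments via the correspondence of \Cref{obs:indicator-is}. The key observation is that a satisfying assignment of $\varphi_G$ corresponds to an independent set of $G$, so the two ``halves'' of any bipartition of an induced bipartite subgraph give exactly the two independent sets we need.

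For the forward direction, suppose $U \subseteq V$ induces a bipartite subgraph of $G$ with $|U| \ge k$, and let $(A, B)$ be a bipartition of $G[U]$. By definition of induced bipartite subgraph, both $A$ and $B$ are independent sets of $G$, so by \Cref{obs:indicator-is} the indicator assignments $\alpha_A$ and $\alpha_B$ are both satisfying assignments of $\varphi_G$. Since $A$ and $B$ are disjoint with $A \cup B = U$, the assignments $\alpha_A$ and $\alpha_B$ disagree on exactly the variables $\{x_v : v \in U\}$, so $|\alpha_A \xor \alpha_B| = |U| \ge k$.

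For the converse, suppose $\alpha_1, \alpha_2$ are two satisfying assignments of $\varphi_G$ with $|\alpha_1 \xor \alpha_2| \ge k$. Define
\[
A = \{v \in V : \alpha_1(x_v) = 1,\ \alpha_2(x_v) = 0\}, \qquad B = \{v \in V : \alpha_1(x_v) = 0,\ \alpha_2(x_v) = 1\}.
\]
Then $A$ and $B$ are disjoint and $A \cup B = \{v \in V : x_v \in \alpha_1 \xor \alpha_2\}$, so $|A \cup B| \ge k$. Since $A$ is contained in the set of variables set to $1$ by the satisfying assignment $\alpha_1$, \Cref{obs:indicator-is} (applied to $\alpha_1$ restricted appropriately, or just directly from the clauses $\neg x_u \lor \neg x_v$) shows that $A$ is independent in $G$; analogously $B$ is independent. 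Hence $G[A \cup B]$ is bipartite with bipartition $(A, B)$ and contains at least $k$ vertices.

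I do not expect any real obstacle here: the correspondence between satisfying assignments of $\varphi_G$ and independent sets of $G$ does all the work, and the only content of the lemma is the elementary fact that an induced subgraph is bipartite precisely when its vertex set can be split into two independent sets. The only point requiring a moment of care is the backward direction, where one must remember that $A$ and $B$ are defined from the disagreement pattern of $\alpha_1, \alpha_2$ rather than from their $1$-sets directly, so that $|A \cup B|$ equals $|\alpha_1 \xor \alpha_2|$ and both $A, B$ are simultaneously independent.
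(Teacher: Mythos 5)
Your proof is correct and follows essentially the same route as the paper's: both directions use the correspondence of \Cref{obs:indicator-is} between independent sets and satisfying assignments, and in the converse direction you define $A$ and $B$ from the disagreement pattern of $\alpha_1,\alpha_2$ exactly as the paper does. No gaps.
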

\begin{proof}
For the forward direction, assume that $G$ has two disjoint vertex sets $A$ and $B$ with $|A| + |B| \ge k$ whose union induces a bipartite subgraph of $G$ with color classes $A$ and $B$.
Consider two truth assignments $\alpha_A$ and $\alpha_B$.
By~\Cref{obs:indicator-is}, they are satisfying assignments of $\varphi_G$ since $A$ and $B$ are independent sets of $G$.
Moreover, 
\begin{align*}
    |\alpha_1 \xor \alpha_2 | = |A \setminus B| + |B \setminus A| = |A| + |B| \geq k,
\end{align*}
as $A \cap B = \emptyset$.
Thus the forward direction follows.

For the other direction, let $\alpha_1, \alpha_2$ be satisfying assignments of $\varphi_G$ with $|\alpha_1 \xor \alpha_2| \ge k$.
By~\Cref{obs:indicator-is}, $A' = \{v \in V : \alpha_1(x_v) = 1\}$ and $B' = \{v \in V : \alpha_2(x_v) = 1\}$ are independent sets of $G$.
Consider two subsets $A \subseteq A'$ and $B \subseteq B'$ of $V$ defined as $A = \{v  :  \alpha_1(x_v)=1, \alpha_2(x_v)=0\}$ and $B = \{v  :  \alpha_1(x_v)=0, \alpha_2(x_v)=1\}$.
The union of these two subsets induces a bipartite subgraph of $G$, as $A$ and $B$ are indeed independent sets of $G$.
We can then observe that 
\begin{align*}
    |A| + |B| = |A \setminus B| + |B \setminus A| = |\alpha_1 \xor \alpha_2 |\geq k,
\end{align*}
as $A \cap B = \emptyset$.
Hence the converse direction follows.
\end{proof}

As a corollary of the construction above, we can show that \prb{Diverse Pair of Solutions} is W[1]-hard for monotone formulas.
Instead of taking the conjunction of $\neg x_u \lor \neg x_v$, we take the conjunction of $x_u \lor x_v$ for all edges $\{u, v\} \in E$.
The formula obtained in this way is denoted by $\overline{\varphi}_G$.

\begin{corollary}\label{cor:hardness:monotone}
    There is an induced bipartite subgraph of $G$ with at least $k$ vertices if and only if $\overline{\varphi}_G$ has two satisfying assignments $\alpha_1, \alpha_2$ such that $|\alpha_1 \xor \alpha_2| \ge k$. 
\end{corollary}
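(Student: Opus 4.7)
The plan is to deduce the corollary directly from Lemma~\ref{lem:hardness:antimonotone} by exploiting the symmetry between $\varphi_G$ and $\overline{\varphi}_G$ under complementation of truth assignments, so that essentially no new argument is required.

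First I would observe that for any truth assignment $\alpha\colon X \to \{0, 1\}$, the assignment $\alpha$ satisfies $\varphi_G$ if and only if the complement assignment $\overline{\alpha}$ satisfies $\overline{\varphi}_G$. This is immediate from the construction: a clause $\neg x_u \lor \neg x_v$ of $\varphi_G$ is satisfied by $\alpha$ exactly when at least one of $\alpha(x_u), \alpha(x_v)$ is $0$, which is equivalent to at least one of $\overline{\alpha}(x_u), \overline{\alpha}(x_v)$ being $1$, i.e., to the corresponding clause $x_u \lor x_v$ of $\overline{\varphi}_G$ being satisfied by $\overline{\alpha}$.

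Next I would note that the Hamming distance is invariant under simultaneous complementation: for any two assignments $\alpha_1, \alpha_2$, we have $|\alpha_1 \xor \alpha_2| = |\overline{\alpha}_1 \xor \overline{\alpha}_2|$, since $\alpha_1(x) \neq \alpha_2(x)$ if and only if $\overline{\alpha}_1(x) \neq \overline{\alpha}_2(x)$ for every variable $x$. Combining these two observations, the map $(\alpha_1, \alpha_2) \mapsto (\overline{\alpha}_1, \overline{\alpha}_2)$ is a bijection between pairs of satisfying assignments of $\varphi_G$ and pairs of satisfying assignments of $\overline{\varphi}_G$, and this bijection preserves the size of the symmetric difference. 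Hence $\overline{\varphi}_G$ admits two satisfying assignments at Hamming distance at least $k$ if and only if $\varphi_G$ does, and Lemma~\ref{lem:hardness:antimonotone} immediately yields the equivalence with the existence of an induced bipartite subgraph of $G$ with at least $k$ vertices.

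There is no real obstacle in this argument; the only thing to verify is that the proposed correspondence is indeed a bijection, which is clear since complementation is self-inverse. Should one prefer a self-contained argument, the proof of Lemma~\ref{lem:hardness:antimonotone} can alternatively be adapted almost verbatim by defining, for a vertex set $U \subseteq V$, the indicator-type assignment $\beta_U(x_v) = 0$ iff $v \in U$, observing that $U$ is independent in $G$ exactly when $\beta_U$ satisfies $\overline{\varphi}_G$, and repeating the two set-theoretic computations of $|\alpha_1 \xor \alpha_2|$ used in the lemma. Either route works, but the complementation reduction is cleaner and emphasizes that the monotone and antimonotone cases are genuinely dual.
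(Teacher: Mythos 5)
Your proof is correct and rests on the same complementation duality the paper uses: the paper re-runs the lemma's argument with the complemented indicator assignments $\overline{\alpha}_A, \overline{\alpha}_B$, while you package the same observation as a distance-preserving bijection between satisfying pairs of $\varphi_G$ and $\overline{\varphi}_G$ and then invoke Lemma~\ref{lem:hardness:antimonotone} as a black box. This is essentially the same approach, stated slightly more cleanly.
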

\begin{proof}
Similarly to \Cref{lem:hardness:antimonotone}, we can observe that $U$ is an independent set of $G$ if and only if $\overline{\alpha}_U$ is a satisfying assignment of $\overline{\varphi}_G$.
In the forward direction, we take $\overline{\alpha}_A$ and $ \overline{\alpha}_B$ instead of $\alpha_A$ and $\alpha_B$, respectively.
The other direction is also analogous.
\end{proof}

By \Cref{lem:hardness:antimonotone} and \Cref{cor:hardness:monotone}, we conclude the proof of \Cref{thm:MaxPoS:2SAT:main}.
 
We would like to note that \prb{Diverse Pair of Solutions} belongs to XP when the input Boolean formula is either $2$CNF, Horn, or dual Horn.
To see this, we first choose a set of $d$ variables $X'$ from $X$ and a (partial) assignment $\alpha'$ over $X'$.
The subset $X'$ and the partial assignment are intended to contribute to $|\alpha_1 \xor \alpha_2|$ in such a way that $\alpha_1(x) = \alpha'(x)$ and $\alpha_2(x) = \overline{\alpha}'(x)$ for $x \in X'$.
By fixing the truth values of the variables in $X'$ under $\alpha'$ and $\overline{\alpha}'$, we obtain two Boolean formulas $\varphi'$ and $\varphi''$, respectively.
When $\varphi$ is $2$CNF (resp.\  Horn, and dual Horn), both $\varphi'$ and $\varphi''$ are $2$CNF (resp.\  Horn, and dual Horn) as well.
It is easy to verify that $\varphi$ has two satisfying assignments $\alpha_1$ and $\alpha_2$ with $|\alpha_1 \xor \alpha_2| \ge d$ such that $\alpha_1(x) = \alpha'(x)$ and $\alpha_2(x) = \overline{\alpha}'(x)$ for $x \in X'$ if and only if $\varphi'$ and $\varphi''$ are both satisfiable, which can be determined in polynomial time.
Thus, by trying all of these choices (with $\binom{|X|}{d} \cdot 2^d$ candidates), \prb{Diverse Pair of Solutions} can be solved in time $|X|^{O(d)}$. 
 
\section{Fixed-parameter tractability of \prb{Dissimilar Pair of Solutions} for $2$CNF formulas}
This section is devoted to proving that \prb{Dissimilar Pair of Solutions} is fixed-parameter tractable for $2$CNF formulas.
To this end, we first reduce our problem to \prb{Almost 2SAT} with additional constraints and then give an algorithm to solve the reduced problem using an analogous idea of \cite{CyganPPW13}.

Let $\varphi$ be a $2$CNF formula with variable set $X = \{x_1, \dots, x_n\}$.
We construct a $2$CNF formula $\varphi^*$ as follows.
We first duplicate the same $2$CNF formula with a new variable set $Y = \{y_1, \dots, y_n\}$ and denote it by $\varphi'$.
Then we construct the entire formula $\varphi^*$ as
\begin{align*}
    \varphi^* \coloneqq \varphi \land \varphi' \land \bigwedge_{1 \le i \le n}\left((x_i \lor y_i) \land (\neg x_i \lor \neg y_i)\right).
\end{align*}
We refer to each pair of clauses $(x_i \lor y_i) \land (\neg x_i \lor \neg y_i)$ as \emph{asynchronous clauses}: For any satisfying assignment $\alpha$ of $\varphi^*$, it holds that $\alpha(x_i) \neq \alpha(y_i)$.
Let $S$ be the set of asynchronous clauses of $\varphi^*$.

\begin{lemma}\label{lem:FPT2SAT:async-deletion}
    Let $s$ be a nonnegative integer.
    There are two satisfying assignments $\alpha_1$ and $\alpha_2$ of $\varphi$ with $|\alpha_1 \xor \alpha_2| \ge n - s$ if and only if there are at most $s$ clauses in $S$ whose removal makes $\varphi^*$ satisfiable.
\end{lemma}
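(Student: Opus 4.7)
The plan is to exhibit a natural bijection between pairs $(\alpha_1,\alpha_2)$ of assignments of $\varphi$ and assignments $\beta$ of $\varphi^*$ on $X \cup Y$, and then track precisely which asynchronous clauses fail under $\beta$. Concretely, given $\alpha_1,\alpha_2 \colon X \to \{0,1\}$, define $\beta$ by $\beta(x_i) = \alpha_1(x_i)$ and $\beta(y_i) = \alpha_2(x_i)$; this correspondence is clearly a bijection, and because $\varphi'$ is just a copy of $\varphi$ on $Y$, the assignment $\beta$ satisfies $\varphi \land \varphi'$ if and only if $\alpha_1$ satisfies $\varphi$ and $\alpha_2$ satisfies $\varphi$.

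The key observation, which I would state as a small claim before the two directions, is the following bookkeeping fact about each asynchronous pair. Fix an index $i$ and look at the pair $(x_i \lor y_i) \land (\neg x_i \lor \neg y_i)$. If $\beta(x_i) \neq \beta(y_i)$, both clauses are satisfied. If $\beta(x_i) = \beta(y_i) = 1$, then only $\neg x_i \lor \neg y_i$ is violated; if both equal $0$, then only $x_i \lor y_i$ is violated. In either agreement case, \emph{exactly one} of the two clauses in the pair fails. Translating through the bijection, $\alpha_1(x_i) = \alpha_2(x_i)$ forces exactly one clause of $S$ associated to index $i$ to be violated by $\beta$, while $\alpha_1(x_i) \neq \alpha_2(x_i)$ leaves both satisfied.

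For the forward direction, assume $\alpha_1,\alpha_2$ satisfy $\varphi$ with $|\alpha_1 \xor \alpha_2| \ge n - s$, and let $\beta$ be the associated assignment of $\varphi^*$. The number of agreement indices is at most $s$, and by the observation each such index contributes exactly one violated clause in $S$. Let $S'$ be this collection; then $|S'| \le s$ and $\beta$ satisfies $\varphi^* \setminus S'$. For the backward direction, take a set $S' \subseteq S$ with $|S'| \le s$ and an assignment $\beta$ satisfying $\varphi^* \setminus S'$, and define $\alpha_1,\alpha_2$ by $\alpha_1(x_i) = \beta(x_i)$ and $\alpha_2(x_i) = \beta(y_i)$. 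Since $\beta$ satisfies $\varphi \land \varphi'$, both $\alpha_1$ and $\alpha_2$ satisfy $\varphi$. Each agreement index $i$ (i.e., $\alpha_1(x_i) = \alpha_2(x_i)$) forces at least one clause in the $i$-th asynchronous pair to belong to $S'$, and these clauses are distinct for distinct $i$, so the number of agreement indices is bounded by $|S'| \le s$, giving $|\alpha_1 \xor \alpha_2| \ge n - s$.

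I do not expect a real obstacle; the only subtlety is the tight one-clause-per-agreement accounting, which is what makes the numerical bound match exactly rather than being off by a factor of two. I would make that explicit in the small claim above so that the forward and backward counting arguments each reduce to a single line.
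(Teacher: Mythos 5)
Your proposal is correct and follows essentially the same route as the paper's proof: the same correspondence $\beta(x_i)=\alpha_1(x_i)$, $\beta(y_i)=\alpha_2(x_i)$, the same per-index accounting that an agreement at index $i$ costs exactly one clause of the asynchronous pair, and the same counting in both directions. Your explicit upfront claim about which clause of each pair fails is a clean way to package what the paper does inline, but it is not a different argument.
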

\begin{proof}
To prove the forward direction, suppose that there are satisfying assignments $\alpha_1$ and $\alpha_2$ of $\varphi$ such that $|\alpha_1 \xor \alpha_2| \ge n - s$.
We define a $2$CNF formula $\hat{\varphi}$ obtained from $\varphi^*$ by removing each clause of the form $x_i \lor y_i$ when $\alpha_1(x_i)=\alpha_2(x_i) =0$, and each clause of the form $\neg x_i \lor \neg y_i$ when $\alpha_1(x_i)=\alpha_2(x_i) =1$.
Note that for each $1 \le i \le n$, at most one of the pair of asynchronous clauses $(x_i \lor y_i)$ and $(\neg x_i \lor \neg y_i)$ is removed.
As $\hat{\varphi}$ is obtained by removing at most $s$ clauses in $S$ from $\varphi^*$, it suffices to show that $\hat\varphi$ is satisfiable.

We define a truth assignment $\beta$ of $\hat{\varphi}$ as
$\beta(x_i) = \alpha_1(x_i)$ and $\beta(y_i) = \alpha_2(x_i)$ for $1 \le i \le n$.
Since both $\alpha_1$ and $\alpha_2$ are satisfying assignments of $\varphi$, all clauses in $\varphi$ and $\varphi'$ are satisfied by $\beta$.
For each $1 \le i \le n$, the pair of asynchronous clauses $(x_i \lor y_i) \land (\neg x_i \lor \neg y_i)$ is satisfied by $\beta$ if $\alpha_1(x_i) \neq \alpha_2(x_i)$.
Otherwise, we remove clause~$x_i \lor y_i$ (resp.\  clause~$\neg x_i \lor \neg y_i$) when $\beta(x_i) = \beta(y_i) = \alpha_j(x_i) = 0$ (resp.\   $\beta(x_i) = \beta(y_i) = \alpha_j(x_i) = 1$) for $j = 1, 2$.
This implies that the remaining clauses in $\hat\varphi$ are satisfied by $\beta$.
Thus $\hat{\varphi}$ is satisfiable.

To prove the opposite direction, suppose that $\hat{\varphi}$ is a satisfiable formula obtained from $\varphi^*$ by deleting at most $s$ clauses in $S$.
Let $\beta$ be a satisfying assignment of $\hat \varphi$.
We define two truth assignments $\alpha_1, \alpha_2$ by $\alpha_1(x_i) = \beta(x_i)$ and $\alpha_2(x_i) = \beta(y_i)$.
Since $\hat \varphi$ contains $\varphi$ and $\varphi'$ as subformulas, $\alpha_1$ and $\alpha_2$ are satisfying assignments of $\varphi$.
The asynchronous clauses in $\hat\varphi$ ensure that
there are at least $n - s$ pairs of variables $x_i, y_i$ such that $\beta(x_i) \neq \beta(y_i)$, which implies that
\begin{align*}
    |\alpha_1 \xor \alpha_2| &= |\{x_i \in X : \alpha_1(x_i) \neq \alpha_2(x_i)\}|\\
    &= |\{x_i \in X : \beta(x_i) \neq \beta(y_i)\}|\\
    &\ge n - s.
\end{align*}
Thus the lemma follows.
\end{proof}

The lemma enables us to reduce our problem to the following \prb{Almost 2SAT with Hard Constraints}.
In this problem, given a $2$CNF formula $\varphi^*$, a subset $S$ of clauses of $\varphi^*$, and a nonnegative integer $s$, the objective is to determine if there are at most $s$ clauses in $S$ whose removal makes $\varphi^*$ satisfiable.
This problem is a natural extension of the well-known \prb{Almost 2SAT}, which corresponds to the case where $S$ contains all clauses of $\varphi^*$.
Applying a similar reduction due to~\cite{CyganPPW13}, we have the following lemma.

\begin{lemma}\label{lem:FPT2SAT:ALMOST2SATwHCs}
    \prb{Almost 2SAT with Hard Constraints} is fixed-parameter tractable parameterized by $s$.
\end{lemma}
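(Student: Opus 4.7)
The plan is to reduce \prb{Almost 2SAT with Hard Constraints} to plain \prb{Almost 2SAT} by a simple duplication trick, and then invoke the known FPT algorithm for \prb{Almost 2SAT} of \cite{CyganPPW13}. The intuition is that if we make many identical copies of each hard clause, a small deletion budget can never afford to destroy all copies, so the hard clauses effectively become uncuttable; only the soft clauses are then ``cheap enough'' to remove.

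Concretely, given an instance $(\varphi^*, S, s)$ of \prb{Almost 2SAT with Hard Constraints}, I would construct a $2$CNF formula $\psi$ by keeping each soft clause $C \in S$ once and, for each hard clause $C$ of $\varphi^*$ not in $S$, including $s+1$ identical copies of $C$ in $\psi$. The resulting formula has size at most $(s+1)\cdot|\varphi^*|$, so the construction is polynomial in the input size. I would then claim that $(\varphi^*, S, s)$ is a yes-instance of \prb{Almost 2SAT with Hard Constraints} if and only if $(\psi, s)$ is a yes-instance of \prb{Almost 2SAT}.

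The forward direction is straightforward: if removing some $T \subseteq S$ with $|T| \le s$ makes $\varphi^*$ satisfiable via an assignment $\beta$, then removing the (single) copies of the clauses in $T$ from $\psi$ also yields a formula satisfied by $\beta$, since all remaining clauses of $\psi$ are clauses of $\varphi^*\setminus T$ (possibly with duplicates). For the backward direction, suppose $T'$ is a set of at most $s$ clauses of $\psi$ whose removal makes $\psi$ satisfiable via some $\beta$. For every hard clause $C$, the formula $\psi$ contains $s+1$ copies of $C$, and $|T'|\le s$, so at least one copy survives; hence $\beta$ satisfies $C$. Setting $T \coloneqq T' \cap S$, we have $|T| \le |T'| \le s$, $T\subseteq S$, and $\beta$ satisfies every clause of $\varphi^*\setminus T$ (both the remaining soft clauses and all hard clauses).

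Once this equivalence is established, it suffices to run the FPT algorithm for \prb{Almost 2SAT} of \cite{CyganPPW13} on $(\psi, s)$, which runs in time $f(s)\cdot |\psi|^{O(1)} = f(s)\cdot |\varphi^*|^{O(1)}$. Since all heavy lifting is done by the existing Almost 2SAT algorithm, there is no real obstacle here; the only thing to verify carefully is the backward direction of the reduction, to make sure that after restricting the deletion set to $S$ the hard clauses remain satisfied, which is exactly what the $s+1$ duplication guarantees.
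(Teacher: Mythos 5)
Your reduction is correct, but it takes a different (and more modular) route than the paper. The paper does not pass through plain \prb{Almost 2SAT} at all: it reduces \prb{Almost 2SAT with Hard Constraints} directly to \prb{Vertex Cover Above Maximum Matching}, adapting the gadget of \cite{CyganPPW13} by replicating each literal-occurrence vertex $s+1$ times, placing hard-clause edges in all $s+1$ layers and soft-clause edges only in layer $0$; a vertex cover of size $\mu(G)+s$ then cannot afford to ``pay'' for an unsatisfied hard clause. Your duplication of each hard clause into $s+1$ identical copies is the same replication idea, but executed at the formula level so that \prb{Almost 2SAT} can be invoked as a black box. Both constructions incur the same $(s+1)$-factor blow-up and, since the algorithm of \cite{CyganPPW13} for \prb{Almost 2SAT} itself goes through \prb{Vertex Cover Above Maximum Matching}, your composed reduction ultimately lands in the same place and yields the same $2.3146^s n^{O(1)}$ bound via \cite{LokshtanovNRRS14}; the paper's one-step version just avoids the intermediate problem. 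The only point you should make explicit is that your argument needs \prb{Almost 2SAT} in its multiset (or list-of-clauses) formulation, so that the $s+1$ identical copies of a hard clause count as $s+1$ distinct deletable objects; this is the standard convention and is respected by the algorithm of \cite{CyganPPW13} (each clause occurrence gets its own gadget edge), but under a set-of-clauses convention the duplication would collapse and the reduction would be vacuous.
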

\begin{proof}
We give a reduction to \prb{Vertex Cover Above Maximum Matching}, which is known to be fixed-parameter tractable~\cite{CyganPPW13}.
In this problem, given a graph $G$ with parameter $k$, the goal is to determine whether $G$ has a vertex cover of size at most $k+\mu(G)$, where $\mu(G)$ is the size of a maximum matching of $G$.\footnote{It is well known that the minimum size of a vertex cover of $G$ is at least the maximum size of a matching of $G$ for every graph $G$.}
The basic idea of the reduction is similar to that used in \cite{CyganPPW13}.
The key difference from it is to enlarge gadgets, which force us to delete clauses only from $S$.

Let $I=(\varphi, S, s)$ be an instance of \prb{Almost 2SAT with Hard Constraints}.
By replacing a unit clause $\ell$ to $\ell \vee \ell$, without loss of generality, we can assume that there is no unit clause (i.e., a clause composed of a single literal) in $\varphi$.

We construct a graph $G_I$ from $I$ as follows.
Let $X$ be the set of variables and let $C$ be the set of clauses in $\varphi$.
For $x \in X$, let $n_x$ be the number of occurrences of variable $x$ (which counts both $x$ and $\neg x$), and for $1 \le i \le n_x$, let $c_{x, i}$ be the clause that contains $i$-th occurrence of $x$ or $\neg x$.
For a literal $\ell$, we denote by $v(\ell)$ its variable (i.e., $v(x) = v(\neg x)=x$).
For each literal $\ell \in \{x, \neg x\}$, we define the set of $n_x(s + 1)$ vertices $V(\ell) \coloneqq \{v^j_{\ell, i}  :  1 \leq i \leq n_x, 0 \leq j \leq s\}$.
The vertex set of $G_I$ is defined as the union of all sets $V(x)$ and $V(\neg x)$ for $x \in X$.
The graph $G_I$ contains three types of edges: \emph{variable edges}, \emph{hard-clause edges}, and \emph{soft-clause edges}.
For each $x \in X$, we put a variable edge between every pair of vertices $u\in V(x)$ and $v \in V(\neg x)$, that is, $V(x) \cup V(\neg x)$ induces a complete bipartite graph in $G_I$.
For each clause $c = (\ell \lor \ell') \in C \setminus S$, we put a hard-clause edge between $v^j_{\ell, i}$ and $v^j_{\ell', i'}$ for all $0 \le j \le s$, where $i$ and $i'$ are the indices that satisfy $c_{v(\ell), i} = c_{v(\ell'), i'} = c$.
For each clause $(\ell \lor \ell') \in S$, we put a soft-clause edge between $v^0_{\ell, i}$ and $v^0_{\ell', i'}$, where $i$ and $i'$ are defined as above.
Since $V(x) \cup V(\neg x)$ induces a complete bipartite subgraph, $G_I$ has a perfect matching of size $N \coloneqq \sum_{x \in X}n_x \cdot (s+1)$.
In the following, we show that $I$ is a yes-instance if and only if $G_I$ has a vertex cover of size at most $N + s$.
See \Cref{fig:gadgets} for an illustration.
\begin{figure}
    \centering
    \includegraphics{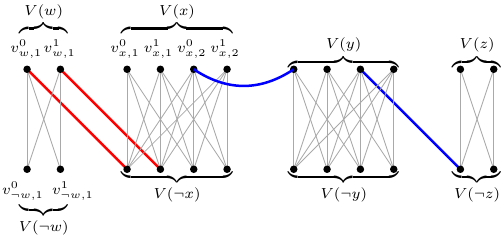}
    \caption{The figure depicts the graph $G_I$ for $\varphi = (w \lor \neg x) \land (x \lor y) \land (y \lor \neg z)$, $S = \{(x \lor y), (y \lor \neg z)\}$, and $s = 1$.
    Red and blue lines represent hard and soft-clause edges, respectively.
    }
    \label{fig:gadgets}
\end{figure}

Suppose that $I=(\varphi, S, s)$ is a yes-instance of \prb{Almost 2SAT with Hard Constraints}, that is, there are at most $s$ clauses in $S$ such that the formula $\hat \varphi$ obtained from $\varphi$ by deleting them is satisfiable.
Let $\alpha$ be a satisfying assignment of $\hat \varphi$.
We define a vertex set $U$ of $G_I$ as follows.
For each variable $x \in X$, the vertex set $U$ contains all vertices in $V(x)$ if $\alpha(x) = 1$ and all vertices in $V(\neg x)$ if $\alpha(x) = 0$.
For each deleted clause $c = (\ell \lor \ell')$, we include vertex $v^0_{\ell, i}$ into $U$, where $c_{v(\ell), i} = c$.
By the construction, we have $|U| \leq N + s$.
We show that $U$ is a vertex cover of $G_I$.
It is easy to see that every variable edge is covered by $U$, as either $V(x) \subseteq U$ or $V(\neg x) \subseteq U$ for $x \in X$.
For each hard-clause edge $e$ corresponding to the clause $(\ell \lor \ell') \in C \setminus S$, $e$ is covered by $U$ as at least one of these literals is evaluated to $1$ under $\alpha$, yielding that the corresponding end vertex is included in $U$.
For each soft-clause edge $e = \{v^0_{\ell, i}, v^0_{\ell', i'}\}$ corresponding to the clause $(\ell \lor \ell') \in S$, $U$ contains at least one of the end vertices of $e$ due to the same argument when it appears in $\hat \varphi$ or due to the fact that $U$ contains $v^0_{\ell. i}$ when it is deleted. 
Consequently, $G_I$ has a vertex cover $U$ with the size at most $N + s$.

To prove the opposite direction, let $U$ be a vertex cover of $G_I$ with $|U| \leq N + s$.
We can observe that at least one of $V(x) \subseteq U$ or $V(\neg x) \subseteq U$ holds for each variable $x$ because otherwise some edge in the complete bipartite graph induced by $V(x) \cup V(\neg x)$ is not covered by $U$.
We define a truth assignment $\alpha$ of $\varphi$ as: For $x \in X$, we set $\alpha(x) = 1$ if $V(\neg x) \cap U = \emptyset$; $\alpha(x) = 0$ if $V(x) \cap U = \emptyset$; and otherwise $\alpha(x) = 1$ or $\alpha(x) = 0$ arbitrarily.
Let $S'$ be the set of all clauses in $\varphi$ that are not satisfied by $\alpha$.
In the following, we prove that (i) $S' \subseteq S$ and (ii) $|S'| \le s$.

Let $U_1$ be the union of $V(\ell)$ for all literals $\ell$ that are evaluated to $1$ under $\alpha$ and let $U_2 = U \setminus U_1$.
Note that $U_1$ and $U_2$ are subsets of $U$.
Moreover, we have $|U_2| = |U| - |U_1| \leq s$ as
\begin{align*}
    |U_1| = \sum_{x \in X}\sum_{\substack{\ell \in \{x, \neg x\}\\\alpha(\ell) = 1}} |V(\ell)| = \sum_{x \in X}n_x \cdot (s+1) = N.
\end{align*}

To show (i) suppose for contradiction that there is a clause $(\ell \lor \ell') \in S' \setminus S$.
There are $s + 1$ hard-clause edges $\{v^j_{\ell, i}, v^j_{\ell', i'}\}$ in $G_I$ with $0 \le j \le s$.
As $(\ell \lor \ell') \in S'$ and it is not satisfied by $\alpha$, we have $v^j_{\ell, i}, v^j_{\ell', i'} \notin U_1$.
Since $U$ is a vertex cover of $G_I$, at least one of $v^j_{\ell, i}$ and $ v^j_{\ell', i'} $ are contained in $U_2$ for each $0 \le j \le s$, which contradicts the fact that $U_2$ contains at most $s$ vertices.

For each clause $(\ell \lor \ell') \in S'$ there is a constraint clause edge $\{v^0_{\ell, i}, v^0_{\ell', i'}\}$ for some $i, i'$.
By the same argument of (i), at least one of its end vertices is contained in $U_2$, implying that $|S'| \le |U_2| \le s$.
\end{proof}

As a consequence of~\Cref{lem:FPT2SAT:async-deletion,lem:FPT2SAT:ALMOST2SATwHCs}, we can reduce our problem to \prb{Vertex Cover Above Maximum Matching}.
By the best known algorithm for \prb{Vertex Cover Above Maximum Matching} due to~\cite{LokshtanovNRRS14}, the following theorem is established.

\begin{theorem}\label{thm:DualPoS:2SAT:main}
    \prb{Dissimilar Pair of Solutions} is solvable in time $2.3146^sn^{O(1)}$, provided that the input formula is restricted to $2$CNF formulas, where $n$ is the number of variables in the input $2$CNF formula.
\end{theorem}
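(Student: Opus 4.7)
The plan is to chain the two preceding lemmas with the current fastest algorithm for \prb{Vertex Cover Above Maximum Matching}. Given a $2$CNF instance $(\varphi, s)$ with $n$ variables, I first construct the auxiliary formula $\varphi^*$ on $2n$ variables and identify the set $S$ of asynchronous clauses exactly as in the setup of \Cref{lem:FPT2SAT:async-deletion}. This step is polynomial in $n + |\varphi|$, and by that lemma $(\varphi, s)$ is a yes-instance of \prb{Dissimilar Pair of Solutions} if and only if $(\varphi^*, S, s)$ is a yes-instance of \prb{Almost 2SAT with Hard Constraints}.

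Next, I apply the reduction from \Cref{lem:FPT2SAT:ALMOST2SATwHCs} to $(\varphi^*, S, s)$, producing a graph $G_I$ that has a perfect matching of size $N$, where $N$ equals the sum of $n_x(s+1)$ over all variables $x$ of $\varphi^*$, and asking whether $G_I$ admits a vertex cover of size at most $N + s$. Each clause contributes at most $s+1$ gadget edges, and each literal $\ell$ contributes $n_{v(\ell)}(s+1)$ vertices, so $G_I$ has size polynomial in $|\varphi^*|$ and linear in $s+1$; the construction therefore stays inside the $n^{O(1)}$ factor.

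Finally, I invoke the algorithm of \cite{LokshtanovNRRS14}, which solves \prb{Vertex Cover Above Maximum Matching} in time $2.3146^k \cdot n^{O(1)}$, where $k$ is the allowed excess above the maximum matching. In our reduction the excess parameter is precisely $s$, so the overall running time is $2.3146^s \cdot n^{O(1)}$, matching the claim.

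The only real thing to check carefully is this last point: the parameter fed to the outer algorithm must remain exactly $s$, with no hidden dependence on $n$ (for instance, through $N$). The gadget in \Cref{lem:FPT2SAT:ALMOST2SATwHCs} is designed precisely so that $N$ coincides with the size of a perfect matching of $G_I$, so the above-guarantee parameter is exactly the slack $s$, which is why the single-exponential dependence on $s$ survives. Beyond this bookkeeping, there is no real obstacle, since all of the technical work has already been absorbed into the two lemmas and into the external vertex-cover algorithm.
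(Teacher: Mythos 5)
Your proposal is correct and follows exactly the paper's route: chain \Cref{lem:FPT2SAT:async-deletion} and \Cref{lem:FPT2SAT:ALMOST2SATwHCs} and then invoke the $2.3146^k n^{O(1)}$ algorithm of \cite{LokshtanovNRRS14} for \prb{Vertex Cover Above Maximum Matching}, with the above-matching excess being exactly $s$. Your explicit check that the parameter passed to the outer algorithm carries no hidden dependence on $n$ is the right thing to verify and is handled correctly.
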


\section{Fixed-parameter tractability and W[1]-hardness for XOR formulas}
In this section, we prove the following upper and lower bound results for XOR formulas.
\begin{theorem}\label{thm:MaxPos:XOR:main}
    \prb{Diverse Pair of Solutions} is fixed-parameter tractable and \prb{Dissimilar Pair of Solutions} is \emph{W[1]}-hard for XOR formulas.
\end{theorem}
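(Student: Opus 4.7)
The plan is to recast both problems as questions about the binary linear code $\ker A$, where $A\in\mathbb{F}_2^{m\times n}$ is the coefficient matrix of the linear system $A\alpha=b$ over $\mathbb{F}_2$ representing the XOR formula $\varphi$. The satisfying assignments of $\varphi$ form a single affine coset $\alpha_0+\ker A$, so for any two satisfying assignments $\alpha_1,\alpha_2$, the symmetric difference $\alpha_1\xor\alpha_2$ (viewed as an element of $\mathbb{F}_2^n$) lies in $\ker A$, and conversely every element of $\ker A$ arises this way. Hence \prb{Diverse Pair of Solutions} becomes: does $\ker A$ contain a vector of weight at least $d$? And \prb{Dissimilar Pair of Solutions} becomes: does $\ker A$ contain a vector of weight at least $n-s$?

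For the FPT part, I would compute (via Gaussian elimination) a basis of $\ker A$ together with the set $S\subseteq[n]$ of \emph{active coordinates}, i.e., positions $i$ for which some $c\in\ker A$ has $c_i=1$. For each $i\in S$, the map $c\mapsto c_i$ is a nontrivial $\mathbb{F}_2$-linear functional on $\ker A$, so exactly half of the codewords of $\ker A$ have $c_i=1$. Summing shows the average codeword weight equals $|S|/2$, and therefore some codeword has weight at least $\lceil|S|/2\rceil$. If $|S|\geq 2d-1$ the algorithm outputs YES; otherwise $\dim\ker A\leq|S|<2d-1$, so $\ker A$ contains fewer than $2^{2d-1}$ codewords and the algorithm enumerates them all by iterating over $\mathbb{F}_2$-combinations of basis vectors, giving total time $2^{O(d)}\cdot\mathrm{poly}(|\varphi|)$.

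For the W[1]-hardness, I would reduce from the \prb{Nearest Codeword Problem} (NCP)---given $A\in\mathbb{F}_2^{m\times n}$, $t\in\mathbb{F}_2^m$, $k\in\mathbb{N}$, decide whether some $y$ satisfies $Ay=t$ and $\mathrm{wt}(y)\leq k$---which is W[1]-hard by Downey, Fellows, Vardy, and Whittle. From an NCP instance $(A,t,k)$ I would construct an XOR formula $\varphi^*$ on $N=n+1+k$ variables $x_1,\dots,x_n,z,w_1,\dots,w_k$ given by the homogeneous constraints $Ax\oplus(A\mathbf{1}+t)z=\mathbf{0}$ and $w_i\oplus z=0$ for each $i\in\{1,\dots,k\}$. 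The kernel of the coefficient matrix of $\varphi^*$ splits into two branches: when $z=0$, we have $w=\mathbf{0}$ and $x\in\ker A$, contributing weight at most $n$; when $z=1$, we have $w=\mathbf{1}_k$ and $x=y+\mathbf{1}$ for some $y$ with $Ay=t$, contributing weight exactly $(n-\mathrm{wt}(y))+1+k$. Setting the Dissimilar threshold $s=k$, we have $N-s=n+1$: the first branch can never reach $n+1$ since $n<n+1$, while the second branch reaches it precisely when $\mathrm{wt}(y)\leq k$. This yields the desired FPT reduction from NCP.

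The most delicate point is the role of the $k$ padding variables $w_1,\dots,w_k$: without them, arbitrary high-weight codewords of $\ker A$ itself would appear in the $z=0$ branch and could meet the Dissimilar threshold without any reference to the NCP instance, breaking the equivalence. Adding the padding raises the threshold just enough to suppress the $z=0$ branch while preserving the contribution of the $z=1$ branch, forcing the latter (which directly encodes small-weight NCP solutions via the substitution $x=y+\mathbf{1}$) to be the sole witness of a YES-instance.
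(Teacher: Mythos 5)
Your proposal is correct, and both halves take a genuinely different route from the paper, even though the initial translation (satisfying assignments form a coset of $\ker A$, so both problems reduce to finding a maximum-weight vector in $\ker A$) is exactly the paper's Observation~\ref{obs:XOR:RedtoOneVar}. For the FPT part, the paper simply invokes a black-box result of Arvind et al.\ that deciding whether $A\vb{y}=\vb{0}$ has a solution of weight at least $d$ is FPT in $d$; your win--win argument (if the number of active coordinates is at least $2d-1$, the average codeword weight already certifies YES; otherwise $\dim\ker A\le 2d-2$ and brute-force enumeration runs in $2^{O(d)}\cdot\mathrm{poly}$) is a correct, self-contained replacement that additionally yields an explicit single-exponential bound. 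For the hardness part, the paper observes that after the translation the problem \emph{is} the dual-parameterized \prb{Even Set} problem (does $A\vb{x}=\vb{0}$ have a solution of weight at least $n-s$), which is already known to be W[1]-hard in $s$, so no gadget is needed; you instead reduce from \prb{Maximum Likelihood Decoding} (NCP), which forces you to build the $z$/$w_i$ padding gadget to suppress the homogeneous branch. Your reduction checks out (including the degenerate case where a row of $A$ is zero but the corresponding entry of $t$ is $1$, which forces $z=0$ and correctly yields a NO-instance), and it rests on an older, more classical hardness source, at the cost of being longer than the paper's essentially one-line argument.
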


It is well known that the satisfiability problem for XOR formulas can be represented as the feasibility problem of linear equations over $\mathbb F_2$ (see~\cite{Schaefer78} for example).
Thus, in the following, we consider the problems of finding solutions $\vb{x}^*_1, \vb{x}^*_2$ of the system of linear equations $A\vb{x} = \vb{b}$ (over $\mathbb F_2$) with $|\vb{x}^*_1 - \vb{x}^*_2| \ge n - s$ (or $|\vb{x}^*_1 - \vb{x}^*_2| \ge d$), where $|\vb{x}|$ is the Hamming weight of a vector $\vb{x}$ and $n$ is the number of variables in the input formula.

By the Gaussian elimination algorithm, we can find a solution $\vb{x}^*$ of $A\vb{x} = \vb{b}$ (if it exists) in polynomial time.
It is well known that each solution $\vb{z}^*$ of $A\vb{x} = \vb{b}$ can be represented as $\vb{z}^* = \vb{x}^* + \vb{y}^*$ for some solution $\vb{y}^*$ of $A\vb{x} = \vb{0}$ and vice versa.
Thus, our problem is equivalent to that of finding two solutions $\vb{y}^*_1, \vb{y}^*_2$ of the system of linear equations $A\vb{y} = \vb{0}$ as
\begin{align*}
    |\vb{x}^*_1 - \vb{x}^*_2| = |\vb{x}^* + \vb{y}^*_1 - (\vb{x}^* + \vb{y}^*_2)| = |\vb{y}^*_1 - \vb{y}^*_2|
\end{align*}
for some solutions $\vb{y}^*_1, \vb{y}^*_2$ of $A\vb{y} = \vb{0}$.
Moreover, as the set of solutions of $A\vb{y} = \vb{0}$ forms a linear space ${\rm Ker} A$, $\vb{y}^*_1 - \vb{y}^*_2$ is also a solution of $A\vb{y} = \vb{0}$ as well.
Given this, it suffices to find a solution $\vb{y}^*$ of $A\vb{y} = \vb{0}$ maximizing its Hamming weight (i.e., $|\vb{y}^*|$).

\begin{observation}\label{obs:XOR:RedtoOneVar}
    Suppose that $A\vb{x} = \vb{b}$ has at least one solution.
    Then there are two solutions $\vb{x}^*_1, \vb{x}^*_2$ of $A\vb{x} = \vb{b}$ with $|\vb{x}^*_1 - \vb{x}^*_2| \ge d$ if and only if there is a solution $\vb{y}^*$ of $A\vb{y} = \vb{0}$ with $|\vb{y}^*| \ge d$.
\end{observation}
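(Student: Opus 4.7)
The plan is to establish both directions by exploiting the fact, already noted in the paragraph preceding the observation, that the solution set of $A\vb{x} = \vb{b}$ is an affine subspace parallel to $\mathrm{Ker}\, A$. Over $\mathbb{F}_2$, subtraction coincides with addition, so for any two vectors $\vb{u}, \vb{v} \in \mathbb{F}_2^n$ the quantity $|\vb{u} - \vb{v}|$ is exactly the Hamming distance between them, and in particular $|\vb{y}^*|$ is the Hamming distance from $\vb{y}^*$ to $\vb{0}$.

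For the forward direction, I would take two solutions $\vb{x}^*_1, \vb{x}^*_2$ of $A\vb{x} = \vb{b}$ with $|\vb{x}^*_1 - \vb{x}^*_2| \ge d$ and set $\vb{y}^* \coloneqq \vb{x}^*_1 - \vb{x}^*_2$. Linearity of $A$ gives $A\vb{y}^* = A\vb{x}^*_1 - A\vb{x}^*_2 = \vb{b} - \vb{b} = \vb{0}$, so $\vb{y}^*$ solves $A\vb{y} = \vb{0}$, and by construction $|\vb{y}^*| = |\vb{x}^*_1 - \vb{x}^*_2| \ge d$.

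For the backward direction, I would use the hypothesis that $A\vb{x} = \vb{b}$ has at least one solution to fix some $\vb{x}^* \in \mathbb{F}_2^n$ with $A\vb{x}^* = \vb{b}$, and then, given a solution $\vb{y}^*$ of $A\vb{y} = \vb{0}$ with $|\vb{y}^*| \ge d$, define $\vb{x}^*_1 \coloneqq \vb{x}^*$ and $\vb{x}^*_2 \coloneqq \vb{x}^* + \vb{y}^*$. Linearity again gives $A\vb{x}^*_2 = A\vb{x}^* + A\vb{y}^* = \vb{b} + \vb{0} = \vb{b}$, so both are solutions of $A\vb{x} = \vb{b}$, and $|\vb{x}^*_1 - \vb{x}^*_2| = |\vb{y}^*| \ge d$.

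There is no real obstacle: the statement is a direct restatement of the affine/kernel correspondence, and the proof amounts to two short applications of linearity of $A$ together with the identification of Hamming distance with $|\,\cdot\,|$ over $\mathbb{F}_2$. The assumption that $A\vb{x} = \vb{b}$ has at least one solution is used only in the backward direction, to supply the base point $\vb{x}^*$ for translating kernel elements into pairs of solutions.
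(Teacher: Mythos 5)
Your proof is correct and follows essentially the same route as the paper, which justifies the observation via the preceding paragraph's affine/kernel correspondence: differences of solutions of $A\vb{x} = \vb{b}$ lie in $\mathrm{Ker}\, A$, and conversely any kernel element added to a fixed solution yields another solution. Nothing is missing.
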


The following theorem immediately proves the former part of \Cref{thm:MaxPos:XOR:main}.

\begin{theorem}[\cite{ArvindKKT16}]
    The problem of deciding if there is a solution $\vb{y}^*$ of a given system of linear equations $A\vb{y} = \vb{0}$ with Hamming weight $|\vb{y}^*| \ge d$ is fixed-parameter tractable parameterized by $d$.
\end{theorem}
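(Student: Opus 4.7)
The plan is to reduce the problem to a bounded-length instance and then brute-force the reduced code. Given $A \in \mathbb F_2^{m \times n}$, we first compute a basis of the code $C = \ker A$ by Gaussian elimination in polynomial time (the case $d = 0$ being trivial). Then, for each coordinate $i \in \{1, \ldots, n\}$, we check whether the coordinate projection $\pi_i \colon C \to \mathbb F_2$, $\vb{c} \mapsto c_i$, is the zero map; this reduces to checking whether the $i$-th row of the basis matrix is all-zero, and takes polynomial time. Let $S$ be the set of coordinates for which $\pi_i \neq 0$. Coordinates outside $S$ vanish on every element of $C$ and can be discarded without affecting the Hamming weight of any codeword.

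The key observation is a simple averaging argument. For each $i \in S$, surjectivity of the nonzero linear functional $\pi_i$ forces exactly $|C|/2$ codewords to satisfy $c_i = 1$. Summing over $i$ yields
\[
\sum_{\vb{c} \in C} |\vb{c}| \;=\; \sum_{i \in S} |\{\vb{c} \in C : c_i = 1\}| \;=\; \frac{|S| \cdot |C|}{2},
\]
so the average Hamming weight over $C$ is exactly $|S|/2$, and therefore some codeword attains weight at least $\lceil |S|/2 \rceil$. Whenever $|S| \geq 2d$ we can consequently answer yes without any further work.

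In the remaining case $|S| < 2d$, the restriction of $C$ to coordinates in $S$ is a linear code of length less than $2d$ and dimension at most $|S| < 2d$, and hence contains fewer than $2^{2d}$ codewords. We enumerate them all and test whether any has Hamming weight at least $d$. The total running time is $2^{O(d)} \cdot (n+m)^{O(1)}$, which gives fixed-parameter tractability parameterized by $d$.

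The main step to get right is the averaging argument, which relies crucially on $\pi_i$ being a nonzero $\mathbb F_2$-linear functional (and hence surjective, with fibers of equal size $|C|/2$); once that is in hand, the reduction to a code supported on $O(d)$ coordinates followed by brute-force enumeration is immediate. I do not foresee a substantial obstacle beyond this point, although a different proof route (for example via color-coding on a basis of $C$, or via algebraic techniques à la Koutis--Williams) could potentially give a better constant in the exponent.
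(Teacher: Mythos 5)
Your proof is correct, and it is worth pointing out that the paper does not actually prove this statement at all: it is imported as a black box from the cited reference and used only to conclude the FPT part of Theorem~\ref{thm:MaxPos:XOR:main}. What you have supplied is a short, self-contained argument based on a clean dichotomy: after restricting to the support $S$ of the code $C = \ker A$, either $|S| \ge 2d$, in which case the averaging argument (each nonzero coordinate functional on the subspace $C$ has kernel of index $2$, so exactly half the codewords are $1$ there, giving average weight $|S|/2$) already certifies a yes-instance, or $|S| < 2d$, in which case $\dim C \le |S| < 2d$ because the projection onto $S$ is injective, and brute-force enumeration of the at most $2^{2d}$ codewords finishes in time $2^{O(d)}(n+m)^{O(1)}$. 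All the steps check out, including the edge cases ($d = 0$, $C = \{\vb{0}\}$), and the argument is even constructive and single-exponential, which is at least as strong as what the paper needs. The only cosmetic slip is the phrase ``$i$-th row of the basis matrix'': if the basis vectors are stored as rows, the coordinate test inspects the $i$-th \emph{column}; this is purely a matter of convention and does not affect correctness. Compared with simply citing the literature, your route has the advantage of making the XOR case of the paper fully self-contained; a reader who wants the sharpest constants could still fall back on the cited work or on color-coding, as you note, but nothing more is required for fixed-parameter tractability.
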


\Cref{obs:XOR:RedtoOneVar} also proves the latter part of \Cref{thm:MaxPos:XOR:main}.
To see this, we consider the problem of deciding whether a given system of linear equations $A\vb{x} = \vb{0}$ (over $\mathbb F_2$) has a solution of Hamming weight at least $n - s$, where $n$ is the number of columns in $A$.
This problem is known as (the dual parameterized version of) \prb{Even Set} and known to be W[1]-hard parameterized by $s$~\cite{GolovachKS12}.

Without loss of generality, we assume that each row of $A$ contains at least one non-zero component.
For each row $(a_1, \ldots, a_n)$ of $A$, we define an XOR clause $(x_{i_1} \oplus \cdots \oplus x_{i_k})$, where $i_1, \ldots, i_k$ be the indices of the rows with non-zero components.
We then negate an arbitrary one literal, say $x_{i_k}$ for each clause.
From these clauses (with exactly one negative literal each), we define an XOR formula $\varphi$ by taking the conjunction, that is,
\begin{align*}
    \varphi \coloneqq \bigwedge_{\text{row in } A} (x_{i_1} \oplus x_{i_2} \oplus \cdots \oplus \neg x_{i_k}).
\end{align*}
By the above observation, $\varphi$ has a pair of satisfying assignments $\alpha_1, \alpha_2$ with $|\alpha_1 \xor \alpha_2| \ge n - s$ if and only if $A\vb{x} = \vb{0}$ has a solution of Hamming weight at least $n - s$, which proves the latter part of \Cref{thm:MaxPos:XOR:main}.

Note that \prb{Dissimilar Pair of Solutions} for XOR formulas can be solved in time $n^{s + O(1)}$, where $n$ is the number of variables in $\varphi$.
The idea is similar to the one used in~\Cref{lem:FPT2SAT:async-deletion}.
Let $\varphi$ be an input XOR formula with variable set $X = \{x_1, \dots, x_n\}$.
We first guess the candidates of $s$ variables that are allowed to have the same assignment (which can be different) in $\alpha_1$ and $\alpha_2$.
Under this guess, it suffices to find two satisfying assignments $\alpha_1, \alpha_2$ such that $\alpha_1(x) \neq \alpha_2(x)$ for all non-candidate variables $x$. 
To this end we construct a copy $\varphi'$ of $\varphi$ over a new variable set $\{y_1, \dots, y_n\}$ and take $\varphi^* \coloneqq \varphi \land \varphi'$.
For each variable $x_i$ that is not chosen in the first guessing step, we add a clause $(x_i \oplus y_i)$, which enforces that $x_i$ and $y_i$ are assigned different truth values, to $\varphi^*$ by taking a conjunction.
Using an analogous argument in \Cref{lem:FPT2SAT:async-deletion}, the resulting formula $\varphi^*$ is satisfiable if and only if there are two satisfying assignments $\alpha_1$ and $\alpha_2$ of $\varphi$ such that $|\alpha_1 \xor \alpha_2| \ge n - s$ and $\alpha_1$ and $\alpha_2$ are allowed to assign the same truth value only to the variables chosen in the first guessing step.
As $\varphi^*$ is also an XOR-formula, this is decidable in polynomial time.

\section{Polynomial-time algorithm for double Horn formulas}
As seen in the previous sections, \prb{Diverse Pair of Solutions} and \prb{Dissimilar Pair of Solutions} are intractable, and hence we managed to have some positive results through fixed-parameter tractability.
In this section, we restrict our focus on double Horn formulas and give a polynomial-time algorithm for a more general problem: finding $k$ satisfying assignments $\alpha_1, \ldots, \alpha_k$ that maximize $\sum_{1 \le i < j \le k} |\alpha_i \xor \alpha_j|$, which is a common objective in this context~\cite{BasteFJMOPR22,deBergMS23,HanakaKKLO22,HanakaK0KKO23}.

Before proceeding with our algorithm, we first observe that \prb{Diverse Pair of Solutions} is solvable in polynomial time for double Horn formulas.
Let $\varphi$ be a double Horn formula with variable set $X = \{x_1, \ldots, x_n\}$.
Observe that each clause of $\varphi$ forms either a unit clause (i.e., a clause with a single literal) or a clause with exactly one positive literal and exactly one negative literal.
The following well-known algorithm yields a satisfying assignment (if it exists):
\begin{itemize}
    \item[(1)] If $\varphi$ has unit conflict clauses $x_i$ and $\neg x_i$, answer ``NO'' and terminate.
    \item[(2)] If $\varphi$ has a unit clause $x_i$ (resp.\  $\neg x_i$), then we assign $1$ to $x_i$ (resp.\ $0$ to $\neg x_i$) and replace $x_i$ in $\varphi$ with $1$ (resp.\ $0$).
    After this, we remove every clause containing $1$ and then replace every clause $(x \lor 0)$ containing $0$ with a unit clause $x$.
    Repeat (1) and (2) as long as $\varphi$ has a unit clause.
    \item[(3)] Now every clause has exactly one positive literal and exactly one negative literal, and thus we answer ``YES''.
\end{itemize}
In step (3), we can obtain a satisfying assignment by assigning all $1$'s or all $0$'s to the remaining variables.
It is not hard to see that this pair of two assignments $\alpha_1$ and $\alpha_2$ is indeed a solution of \prb{Diverse Pair of Solution} as the unit clause elimination (2) proves that the assignment of $x_i$ is fixed in any satisfying assignments and the assignments $\alpha_1$ and $\alpha_2$ maximize $|\alpha_1 \xor \alpha_2|$.

We extend this through a lattice structure of the set of all satisfying assignments of a double Horn formula $\varphi$.
In the following, we assume that $\varphi$ is satisfiable as otherwise the problem is trivial.
We also use the vector notation $\vb{x}^*$ to represent a particular truth assignment.

Let $\mathcal S \subseteq \{0, 1\}^n$ be the set of satisfying assignments of $\varphi$.
It is known that the solution space of Horn formulas is closed under component-wise AND~\cite{CreignouKS01}, that is, for two satisfying assignments $\vb{x}^* = (x^*_1, \ldots, x^*_n)$ and $\vb{y}^* = (y^*_1, \ldots, y^*_n)$ of a Horn formula $\varphi$,
\begin{align*}
    \vb{x}^* \land \vb{y}^* \coloneqq (x^*_1 \land y^*_1, \ldots, x^*_n \land y^*_n)
\end{align*}
is also a satisfying assignment of $\varphi$.
Symmetrically, the solution space of dual Horn formulas is closed under component-wise OR.
These facts imply that $\mathcal S$ forms a lattice with a natural partial order $\preceq$, that is, $\vb{x}^* \preceq \vb{y}^*$ if and only if $x^*_i \leq y^*_i$ for all $1 \le i \le n$.
Thus the lattice has the unique maximum solution and the unique minimum solution\footnote{These solutions are in fact the two assignments $\alpha_1$ and $\alpha_2$ computed by the algorithm described above.}, which are denoted by $\vb{u}^*$ and $\vb{l}^*$, respectively.
By the distributivity of Boolean algebra, this lattice is, in fact, a distributive lattice,
i.e.,
for all $\vb{x}, \vb{y}, \vb{z} \in \mathcal{S}$,
we have $(\vb{x} \lor \vb{y}) \land \vb{z} = (\vb{x} \land \vb{z}) \lor (\vb{y} \land \vb{z})$.

Now, we make a key observation on the distributive lattice $(\mathcal S, \preceq, \land, \lor)$.
The following lemma is similar to the one used in \cite{deBergMS23}.

\begin{lemma}\label{lem:wHorn:chain}
    Let $k$ be a positive integer.
    Then, there are $k$ satisfying assignments $\vb{x}^*_1, \ldots, \vb{x}^*_k \in \mathcal S$ with $\vb{x}^*_1 \preceq \cdots \preceq \vb{x}^*_k$ that maximize
    \begin{align*}
        \sum_{1 \le i < j \le k} |\vb{x}^*_i - \vb{x}^*_j|
    \end{align*}
    over all combinations of $k$ satisfying assignments.
\end{lemma}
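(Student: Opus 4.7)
The plan is a standard exchange argument on the distributive lattice $(\mathcal{S}, \preceq, \land, \lor)$. I would start by fixing any $k$-tuple $(\vb{x}_1, \ldots, \vb{x}_k) \in \mathcal{S}^k$ that maximizes $\Psi \coloneqq \sum_{1 \le i < j \le k} |\vb{x}_i - \vb{x}_j|$; such a tuple exists since $\mathcal{S}$ is finite. Whenever two entries $\vb{x}_a, \vb{x}_b$ with $a < b$ are incomparable under $\preceq$, I would replace them \emph{in place} by $\vb{x}_a \land \vb{x}_b$ (at position $a$) and $\vb{x}_a \lor \vb{x}_b$ (at position $b$). Both new vectors still belong to $\mathcal{S}$ by the closure of the solution space of a double Horn formula under $\land$ and $\lor$, and they are mutually comparable, so the goal is to iterate until no incomparable pair remains and then re-index the resulting chain as $\vb{x}^*_1 \preceq \cdots \preceq \vb{x}^*_k$.

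Two facts must be verified for the argument to go through. First, $\Psi$ is invariant under each exchange. I would prove this via the coordinate-wise identity
\[
    |\vb{x} - \vb{z}| + |\vb{y} - \vb{z}| = |(\vb{x} \land \vb{y}) - \vb{z}| + |(\vb{x} \lor \vb{y}) - \vb{z}|,
\]
which holds because on each coordinate $\{\land, \lor\}$ is either the identity map on $\{x_c, y_c\}$ (when $x_c = y_c$) or a permutation of that pair (when $x_c \ne y_c$), so both sides contribute the same bit value. Applying this with $\vb{z} = \vb{x}_i$ for each $i \notin \{a, b\}$, together with the direct equality $|\vb{x}_a - \vb{x}_b| = |(\vb{x}_a \land \vb{x}_b) - (\vb{x}_a \lor \vb{x}_b)|$, shows $\Psi$ is preserved.

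Second, the process terminates. I would use the potential $\Phi \coloneqq \sum_{c=1}^n \sum_{i=1}^k i \cdot x_{i,c}$, which always lies in $[0, kn]$. For an exchange at positions $a < b$, every coordinate with $x_{a,c} = x_{b,c}$ or $(x_{a,c}, x_{b,c}) = (0,1)$ is unaffected, while any coordinate with $(x_{a,c}, x_{b,c}) = (1,0)$ flips to $(0,1)$ and contributes exactly $b - a \ge 1$ to the increase of $\Phi$. Incomparability of $\vb{x}_a$ and $\vb{x}_b$ forces at least one coordinate of the latter type, so $\Phi$ strictly increases at every exchange and the procedure halts after finitely many steps; at that point every pair of entries is comparable and, after sorting by $\preceq$, we obtain the desired chain of optimum total distance.

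The one substantive step is the coordinate-wise distance identity; once it is in hand, closure of $\mathcal{S}$ under $\land$ and $\lor$ (already established in the excerpt) and the monovariant $\Phi$ handle the rest mechanically. No tools beyond what the paper has already developed are needed.
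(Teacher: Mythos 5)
Your proof is correct and follows the same overall route as the paper: an uncrossing argument that replaces an incomparable pair $\vb{x}_a, \vb{x}_b$ by $\vb{x}_a \land \vb{x}_b$ and $\vb{x}_a \lor \vb{x}_b$, preserving both membership in $\mathcal{S}$ and the objective. Where you differ is in how the two supporting facts are established, and in both cases your version is more self-contained. For the invariance of the objective, the paper derives the identity from the modularity of $\vb{x} \mapsto |\vb{x}|$ combined with the distributivity of the lattice, whereas your coordinate-wise observation --- that on each coordinate the pair $(x_c \land y_c, x_c \lor y_c)$ is just $(x_c, y_c)$ up to swapping --- gives the same identity directly and does not even invoke distributivity. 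For termination, the paper cites a known result of Hurkens et al.\ that finitely many uncrossing steps yield a totally ordered tuple, while you exhibit the explicit monovariant $\Phi = \sum_{c}\sum_{i} i \cdot x_{i,c}$, which strictly increases by at least $b-a \ge 1$ at each step because incomparability forces a coordinate with $(x_{a,c}, x_{b,c}) = (1,0)$. One trivial slip: $\Phi$ is bounded by $n\,k(k+1)/2$, not $kn$; this does not affect the argument, since any integer-valued, bounded, strictly increasing potential suffices.
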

\begin{proof}
Let $f\colon \mathcal S^k \to \mathbb N$ be a function defined as 
\begin{align*}
    f(\vb{x}_1, \ldots, \vb{x}_k) = \sum_{1 \le i < j \le k} |\vb{x}_i - \vb{x}_j|.
\end{align*}
We first see that, for $(\vb{x}_1, \ldots, \vb{x}_k) \in \mathcal{S}^k$ and $i, j \in \{1,\dots, k\}$ with $i < j$, the following identity holds:
\begin{align}
f(\vb{x}_1, \ldots, \vb{x}_k) = f(\vb{x}_1, \ldots,
    \vb{x}_{i-1}, \underline{\vb{x}}_{ij}, \vb{x}_{i+1}, \ldots,
    \vb{x}_{j-1}, \overline{\vb{x}}_{ij},  \vb{x}_{j+1}, \ldots,
    \vb{x}_k),
\label{eq:wHorn:chain:uncrossing}
\end{align}
where $\underline{\vb{x}}_{ij} = \vb{x}_i \land \vb{x}_j$ and $\overline{\vb{x}}_{ij} = \vb{x}_i \lor \vb{x}_j$.
Since the difference of the LHS and RHS of~\eqref{eq:wHorn:chain:uncrossing} is
\begin{align*}
    \sum_{\ell \notin \{i,j\}}\left((|\vb{x}_\ell - \vb{x}_i| +  |\vb{x}_\ell - \vb{x}_j| ) - 
    (|\vb{x}_\ell - \underline{\vb{x}}_{ij}| +  |\vb{x}_\ell - \overline{\vb{x}}_{ij}| )\right)
\end{align*}
and $|\vb{x} - \vb{y}| = |\vb{x} \lor \vb{y}| - |\vb{x} \land \vb{y}|$,
it suffices to see
\begin{align}\label{eq:diff}
    |\vb{x}_\ell \lor \vb{x}_i| - |\vb{x}_\ell \land \vb{x}_i|  
    + |\vb{x}_\ell \lor \vb{x}_j| - |\vb{x}_\ell \land \vb{x}_{j}| = |\vb{x}_\ell \lor \underline{\vb{x}}_{ij}| - |\vb{x}_\ell \land \underline{\vb{x}}_{ij}| 
    + |\vb{x}_\ell \lor \overline{\vb{x}}_{ij}| - |\vb{x}_\ell \land \overline{\vb{x}}_{ij}|
\end{align}
for $\ell \notin \{i,j\}$.
Here we have
\begin{align*}
    |\vb{x}_\ell \lor \vb{x}_i| + |\vb{x}_\ell \lor \vb{x}_j| = |\vb{x}_\ell \lor \overline{\vb{x}}_{ij}| + |(\vb{x}_\ell \lor \vb{x}_i) \land (\vb{x}_\ell \lor \vb{x}_j)| = |\vb{x}_\ell \lor \overline{\vb{x}}_{ij}| + |\vb{x}_\ell \lor \underline{\vb{x}}_{ij}|
\end{align*}
for $\ell \notin \{i,j\}$,
where
the first equality follows from the modularity $|\vb{x}| + |\vb{y}| = |\vb{x} \lor \vb{y}| + |\vb{x} \land \vb{y}|$ of the function $\vb{x} \mapsto |\vb{x}|$
and the second follows from the distributivity of $\mathcal{S}$.
Similarly, we also have
\begin{align*}
    |\vb{x}_\ell \land \vb{x}_i| + |\vb{x}_\ell \land \vb{x}_j| = |\vb{x}_\ell \land \overline{\vb{x}}_{ij}| + |\vb{x}_\ell \land \underline{\vb{x}}_{ij}|.
\end{align*}
Thus we obtain \eqref{eq:diff}.

It is well known (see e.g.,~\cite{Hurkens1988-fs}) in the field of combinatorial optimization that, from any $k$-tuple $(\vb{x}_1, \ldots, \vb{x}_k) \in \mathcal{S}^k$,
we can eventually obtain a totally ordered tuple,
i.e., a tuple $(\vb{y}_1, \dots, \vb{y}_k) \in \mathcal{S}^k$
satisfying $\vb{y}_1 \preceq \cdots \preceq \vb{y}_k$,
by appropriately executing the following procedure finitely many times:
for some $i < j$ with incomparable $\vb{x}_i$ and $\vb{x}_j$, update
\begin{align*}
    (\vb{x}_1, \ldots, \vb{x}_k) \leftarrow (\vb{x}_1, \ldots, \vb{x}_{i-1}, \underline{\vb{x}}_{ij}, \vb{x}_{i+1}, \dots, \vb{x}_{j-1}, \overline{\vb{x}}_{ij}, \vb{x}_{j+1}, \dots, \vb{x}_k).
\end{align*}
By this fact and \eqref{eq:wHorn:chain:uncrossing},
for any $(\vb{x}_1, \ldots, \vb{x}_k) \in \mathcal{S}^k$,
there is a totally ordered tuple $(\vb{y}_1, \dots, \vb{y}_k) \in \mathcal{S}^k$
such that $f(\vb{x}_1, \ldots, \vb{x}_k) = f(\vb{y}_1, \dots, \vb{y}_k)$.
This implies the lemma.
\end{proof}

By~\Cref{lem:wHorn:chain}, we can assume that there is an optimal combination $\vb{x}^*_1, \ldots, \vb{x}^*_k \in \mathcal S$ such that $\vb{x}^*_1 \preceq \cdots \preceq \vb{x}^*_k$.
For $1 \le i \le k$, let $n_i = |\vb{x}^*_i|$.
As $\vb{x}_i \preceq \vb{x}_j$ for $i < j$, we have
\begin{align*}
    |\vb{x_i} - \vb{x_j}| = n_j - n_i.
\end{align*}
Thus we have 
\begin{align*}
     \sum_{i < j} |\vb{x}^*_i - \vb{x}^*_j| &=  \sum_{1 \le i < j \le k} (n_j - n_i)\\
     &=\sum_{1 \le i \le k} \left(\left|\{j : j < i\}\right| - \left|\{j : j > i\}\right| \right) n_i\\
    &= (k - 1)n_k + (k - 3)n_{k-1} + \cdots -(k-3)n_2 - (k-1)n_1\\
    &= \sum_{i = 1}^{\floor{\frac{k}{2}}} (k-(2i-1))(n_{k-i} - n_i).
\end{align*}
Therefore, the objective function attains its maximum when $\vb{x}^*_1 = \vb{x}^*_{2} = \cdots = \vb{x}^*_{\floor{\frac{k}{2}}} = \vb{l}^*$ and $\vb{x}^*_k = \vb{x}^*_{k-1} = \cdots = \vb{x}^*_{k -\floor{\frac{k}{2}}} = \vb{u}^*$, which can be computed in polynomial time.

\begin{theorem}\label{thm:wHorn:polytime}
    The problem of finding $k$ satisfying assignments $\alpha_1, \ldots, \alpha_k$ maximizing $\sum_{1 \le i < j \le k} |\alpha_i \xor \alpha_j|$ is solvable in polynomial time for double Horn formulas.
\end{theorem}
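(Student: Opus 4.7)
The plan is to combine Lemma~\ref{lem:wHorn:chain} with a direct analysis of the objective along chains, and then verify that the extremal assignments $\vb{l}^*$ and $\vb{u}^*$ of the solution lattice can be computed in polynomial time. The claim I aim to establish is that an optimal $k$-tuple is obtained by returning $\floor{k/2}$ copies of $\vb{l}^*$ together with $k - \floor{k/2}$ copies of $\vb{u}^*$.

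First, I would invoke Lemma~\ref{lem:wHorn:chain} to restrict attention to chains $\vb{x}^*_1 \preceq \cdots \preceq \vb{x}^*_k$ in $\mathcal S$. Writing $n_i = |\vb{x}^*_i|$, monotonicity along the chain gives $|\vb{x}^*_i - \vb{x}^*_j| = n_j - n_i$ whenever $i \le j$, and swapping the order of summation collapses the objective to the linear expression
\begin{align*}
\sum_{1 \le i < j \le k} |\vb{x}^*_i - \vb{x}^*_j| = \sum_{i=1}^{k} (2i - k - 1)\, n_i.
\end{align*}
Since every $\vb{x}^*_i \in \mathcal S$ satisfies $\vb{l}^* \preceq \vb{x}^*_i \preceq \vb{u}^*$, each weight $n_i$ lies in the interval $[\,|\vb{l}^*|,\,|\vb{u}^*|\,]$. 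The coefficient $2i - k - 1$ is negative for $i \le \floor{k/2}$, positive for the upper indices, and (when $k$ is odd) zero at the middle index, so the linear expression is maximized by setting $n_i = |\vb{l}^*|$ on the lower block and $n_i = |\vb{u}^*|$ on the upper block. Taking $\vb{x}^*_i = \vb{l}^*$ for $1 \le i \le \floor{k/2}$ and $\vb{x}^*_i = \vb{u}^*$ for the remaining indices realises this bound while still forming a chain in $\mathcal S$, and is therefore optimal.

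Finally, $\vb{l}^*$ and $\vb{u}^*$ are themselves computable in polynomial time via the unit-propagation procedure sketched just before Lemma~\ref{lem:wHorn:chain}: exhaustively propagate forced assignments, then complete the remaining variables with all $0$s to obtain $\vb{l}^*$ and with all $1$s to obtain $\vb{u}^*$. The algorithm then outputs the described multiset in polynomial time. The only substantive ingredient is the chain reduction of Lemma~\ref{lem:wHorn:chain}, which has already been established; the remaining work is the linear-programming observation above and careful bookkeeping of the parity of $k$, so I do not expect any serious obstacle at this stage.
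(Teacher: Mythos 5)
Your proposal is correct and follows essentially the same route as the paper: reduce to a chain via Lemma~\ref{lem:wHorn:chain}, rewrite the objective as a linear function $\sum_i (2i-k-1)n_i$ of the Hamming weights, and push the lower/upper halves to $\vb{l}^*$ and $\vb{u}^*$ computed by unit propagation. Your explicit remarks that each $n_i$ lies in $[\,|\vb{l}^*|, |\vb{u}^*|\,]$ and that the extremal tuple is still a chain are small but welcome additions to the paper's more terse presentation.
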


Finally, we would like to remark that the solutions obtained in this section are far from ``diverse solutions'' as it is possible to maximize the objective function with two extreme solutions $\vb{u}^*$ and $\vb{l}^*$.
It might be more interesting to seek solutions that maximize $\min_{1 \le i < j \le k}|\vb{x}_i - \vb{x}_j|$.

\begin{sloppypar}
\printbibliography
\end{sloppypar}

\end{document}